\newcommand{\bfA}{{\bf A}}
\newcommand{\bfE}{{\bf E}}
\newcommand{\bfQ}{{\bf Q}}
\newcommand{\bfT}{{\bf T}}
\newcommand{\bfe}{{\bf e}}
\newcommand{\bfJ}{{\bf J}}
\newcommand{\bfs}{{{\bf s}}}
\newcommand{\bfx}{{\bf x}}
\newcommand{\bfy}{{\bf y}}
\newcommand{\bfz}{{\bf z}}
\newcommand{\bfr}{\hat{\bf r}}
\newcommand{\bfw}{{\bf w}}
\newcommand{\be}{\begin{equation}}
\newcommand{\ee}{\end{equation}}
\newcommand{\beq}{\begin{eqnarray}}
\newcommand{\eeq}{\end{eqnarray}}
\newtheorem{theorem}{Theorem}
\newtheorem*{theorem*}{Theorem}
\newcommand{\rev}[1]{\textcolor{black}{#1}}
\begin{document}
\thispagestyle{empty}

\title{\Huge Open and Closed-Loop Weight Selection for Pattern Control of Paraboloidal Reflector Antennas with Reconfigurable Rim Scattering}

\author{R. M. Buehrer, {\it IEEE Fellow}, W. W. Howard, {\it Student Member, IEEE} and S. W. Ellingson, {\it Senior Member, IEEE}\thanks{The authors are with {\em Wireless @ Virginia Tech}, Bradley Dept. of Electrical and Computer Engineering, Virginia Tech, Blacksburg, VA 24060, USA.}\thanks{Corresponding author: buehrer@vt.edu}}

\maketitle

\begin{abstract}
\rev{It has been demonstrated that modifying the rim scattering of a paraboloidal reflector antenna through the use of reconfigurable elements along the rim facilitates sidelobe modification including cancelling sidelobes.  
In this work we investigate several open questions with respect to algorithms for determining the weights.  
First, we derive the general weight values needed at each reconfigurable element to place nulls at arbitrary angles.  
Second, since in many cases these weights require gains other than one, we develop a technique for determining unit-modulus weights so as to allow for surfaces which merely modify the phase of the scattered field,  while substantially reducing the gain at arbitrary angles.  Specifically, it is shown that despite the large search space (and non-convexity in the presence of discrete weights), weights can be found with reasonable computational complexity which provide useful cancellation capability.  Third, it is demonstrated that this can be done using open-loop (i.e., with pattern knowledge), closed-loop (without pattern knowledge), or hybrid (with inexact pattern knowledge) techniques both quantized and unquantized phases.   A primary finding is that sufficiently deep nulls are possible with essentially no change in the main lobe gain with practical (binary or quaternary) phase-only weights.}

\end{abstract}

\begin{IEEEkeywords}
Reconfigurable antennas, reconfigurable intelligent surfaces, reflector antennas, sidelobe supression.
\end{IEEEkeywords}

\section{Introduction}
\label{sec:intro}
\rev{ Radio astronomy relies on large reflector antennas to receive weak signals \cite{rohlfs2013tools,CondonRansom+2016,ellingson2015antennas}.  However,   these applications are vulnerable to interference via sidelobes. One solution to interference is time and frequency blanking \cite{series2013techniques,Briggs05}.  This problem can be ameliorated without sacrificing data by sidelobe modiﬁcation or better yet,  sidelobe cancellation.  The latter involves placing a pattern null so as to reject the interference, ideally without impacting the main lobe gain. The traditional approach to sidelobe canceling is to use an array  of feeds \cite{bird2015fundamentals}.  The short-coming of such an approach is that it introduces greater aperture blockage and dynamic variability in the gain and shape of the main lobe.}

\rev{Recently, another approach for cancelling (or generally modifying) the sidelobe(s) of a reflector antenna pattern has been proposed  that  uses reconﬁgurable rim scattering \cite{Ellingson21}.  In that work it was shown that by using reconfigurable elements along the rim of a refletor antenna that introduce phase shifts to the reflected signal, sidelobes  can be altered and even cancelled.  That work examined  prime focus-fed circular axisymmetric paraboloidal reﬂectors,  although the concept is not limited to that type of system.  Specifically, it was shown in \cite{Ellingson21} that as long as  sufficient surface area along the rim is reconfigurable, sidelobe cancellation is possible.  Additionally, the work showed that cancellation did not  require continuously variable phase, but was in fact possible using quantized phase shifts  (viz., binary or quaternary).  While rim-loading had been previously proposed for antenna performance enhancement \cite{Bucci81}, to the best of our knowledge \cite{Ellingson21} is the first work  to propose sidelobe cancellation/modification through reconfigurable elements on the antenna rim.  A rudimentary example  algorithm for determining the binary weights was also described.  We shall refer to this algorithm as {\it serial search}.  }

In the current work we build on the idea proposed in \cite{Ellingson21} by describing algorithms for determining the complex weights needed at each reconfigurable element to cancel sidelobes at specific angles from the reflector axis using open-loop (initially described in \cite{Buehrer2022,Buehrer2023}), closed-loop, and hyrbid approaches.  We define ``open-loop'' approaches as algorithms which determine the appropriate element weights without using the received signal, but instead rely on knowledge of the antenna pattern and the contribution of each element to the pattern.  ``Closed-loop'' approaches are algorithms which determine the appropriate element weights by examining the received signal at the output of the feed (i.e., the total received signal collected by the reconfigurable dish), but without  knowledge (or with limited knowledge) of the antenna pattern. 

More specifically, in this paper we first derive the optimal weights needed to cancel sidelobes at an arbitrary angle $\psi$.\footnote{We will define the specific meaning of $\psi$ shortly.}  We then show that the approach can cancel sidelobes at multiple angles and describe the algorithm to determine the optimal element weights for this case.  However, the optimal weights prove to have two shortcomings:  (a) the complex values are, in general, not unit-modulus; and (b) the optimal weights cause the main lobe gain to vary as a function of the null direction.  

To overcome the latter problem, we show that a constraint for the main lobe gain can be added to the optimization which removes the main lobe gain variability.  The former problem (i.e., requiring element weights with non-unit gain) is less desirable since this would require elements with the ability to attenuate or amplify the signal while scattering.  This is clearly undesirable from a cost/complexity perspective.  To overcome this problem, we formulate a least-squares problem and utilize a version of the gradient projection algorithm to find weights that are constrained to have unit modulus.  It is shown that this algorithm can be successfully used to find weights that form one or more nulls while maintaining  constant main lobe gain. 

While the algorithm described above achieves what we desire, it requires continuously-variable phase at each reconfigurable element.  To overcome this requirement, we must quantize the weights at each element to a small number of possible phase shifts.  Unfortunately, this now becomes an optimization problem with a discrete multi-dimensional search space.  The non-convex discrete nature of this optimization problem does not lend itself to the aforementioned algorithm.  Thus, we employ a different approach.  Specifically we apply the Firefly Algorithm and show that this results in weights that yield the desired nulls.  A primary finding is that sufficiently deep nulls are possible with negligible change in the main lobe with practical (binary or quaternary) phase-only weights.

Using the algorithms described above, we demonstrate that it is possible to find the necessary weights with reasonable complexity and a practical (i.e., limited) number of states.  However, these ``open-loop'' algorithms require that the antenna pattern is known.  Given that the pattern is typically known only approximately, or may be dynamically varying due to e.g., gravitational deformation, it is of interest to find algorithms that allow the weights to be determined without knowledge of the pattern.  Thus, we also develop techniques to determine the weights without such knowledge using a ``closed-loop'' approach that uses the received interfering signal in the output of the feed to drive weight adaptation.  We examine two approaches: (1) a power minimization approach that assumes that the received signal is dominated by the interfering signal; and (2) a library-based method that relies on a library of weights determined from a calibration process.  Both approaches demonstrated success, but require longer convergence times.  To address the convergence time, we also develop a hybrid approach that uses partial pattern knowledge to speed up convergence.

The work in this paper incorporates and builds on our previous work \cite{Ellingson21,Buehrer2022,Buehrer2023} in several ways.  First, in the context of open-loop approaches, we extend the work in  \cite{Buehrer2022,Buehrer2023,Ellingson21} by examining the frequency sensitivity of the optimal weights.  We also explore two approaches to handle frequency sensitivity by either (a) recalculating the optimal weights for a new frequency starting from the previous weights or (b) creating optimization criteria at multiple frequencies.   In addition to this extension of the open-loop approach, we also develop two closed-loop approaches.  The first approach drives the weight adaptation using output power.  For this approach we derive the probability of decision error based on the interference-to-noise ratio, the antenna gain in the direction of interest and the change in gain due to a change in the weights.  Because of the potentially long convergence time of this approach, we also explore the structure of the final weights and show that clustering the weights can improve convergence time.  We also develop a second closed-loop approach based on a library of weights.  This approach also allows for faster convergence.  For both approaches we examine the performance in the presence of a moving source.  Finally, we also develop a hybrid approach that uses the open-loop solution as a starting point for closed-loop iterations.  We show that the approach can potentially improve convergence time by an order of magnitude.

This paper is organized as follows:  Section \ref{sec:system_model} describes the system model used in this work.  The algorithms for determining the weights for each reconfigurable element in an open-loop fashion are presented in \ref{sec:weights}, Section \ref{sec:results} presents numerical results for these open-loop approaches.  The closed-loop approach is described in Section \ref{sec:closedloop} including a real-time simulation example with a single interferer.  A hybrid approach is examine in Section \ref{sec:hybrid}.  Conclusions are presented in Section \ref{sec:concl}.

\section{System Model}
\label{sec:system_model}
The antenna system assumed in this paper is presented in Figure \ref{fig:fig0}.  Following the development in \cite{Ellingson21} we assume the equivalence of transmit and receive patterns and calculate the transmit patterns using physical optics (PO).  The total electric field intensity $\bfE^s$ scattered by the reflector in the far-field direction $\psi$ is given by\footnote{For simplicity/clarity and with no loss of generality we will examine the antenna pattern in the H-plane.  Further, we examine the pattern versus an angle we define as $\psi$ which is the angle from the main lobe in the H-plane (i.e., $\psi=\theta$ with $\phi=90^\circ$).  Clearly the antenna pattern in general depends on both angular coordinates, but for ease of exposition we choose to stay in the H-plane in this discussion.} 
\be
\bfE^s(\psi) = \bfE^s_f(\psi) + \bfE_r^s(\psi)
\ee
where $\bfE_f^s$ is the electric field intensity due to the fixed portion of the dish
\be
\label{eq:field_fixed}
\bfE_f^s(\psi) = -j\omega \mu_o \frac{e^{-j\beta r}}{4\pi r} \int_{\theta_f=0}^{\theta_1} \int_{\phi=0}^{2\pi} \bfJ_0(\bfs^i) e^{j\beta\bfr(\psi)\cdot\bfs^i}ds,
\ee
 where $\bfJ_0(\bfs^i)$ is the PO equivalent surface current distribution, $j=\sqrt{-1}$, $\omega=2\pi f$ is the operating frequency in rad/s, $\mu_0$ is the permeability of free space, $\beta$ is the wavenumber, $\bfr(\psi)$ points from the origin towards the field point in the direction $\psi$, $\theta_f$ is the angle measured from the reﬂector axis of rotation toward the rim with $\theta_f = \theta_1$ at the rim of the fixed portion of the dish and $\theta_f=\theta_0$ at the rim of the entire dish (see Figure \ref{fig:fig0}), $\phi$ is the angular coordinate orthogonal to both $\theta_f$ and the reﬂector axis, and $ds$ is the differential element of surface area.  See Figure \ref{fig:fig0} for more details.
\begin{figure}
    \centering
    \vspace{1cm}
    \includegraphics[ width=0.75\columnwidth]{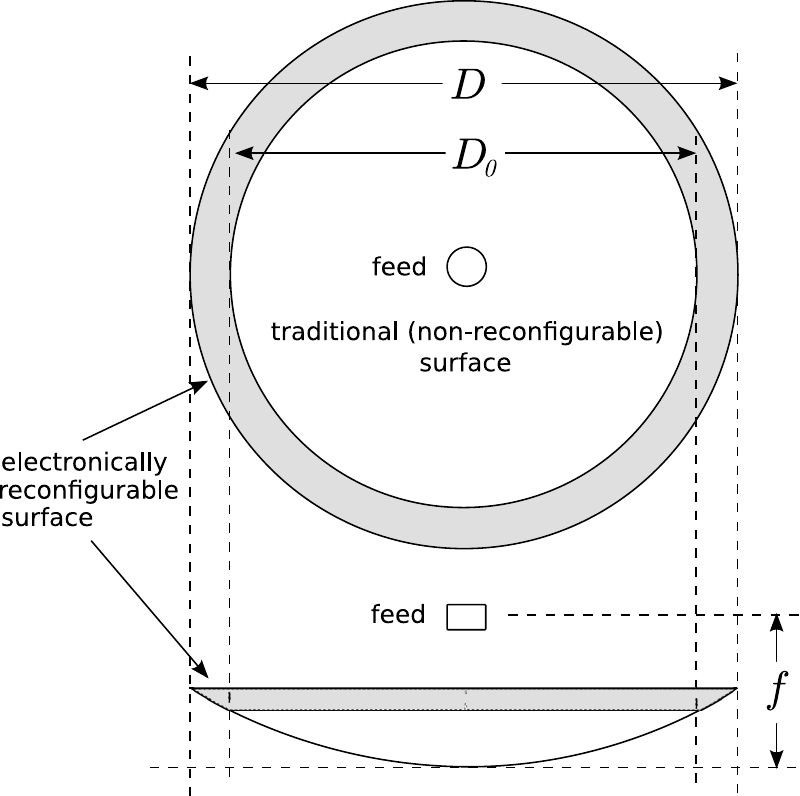}
    \includegraphics[ width=0.5\columnwidth]{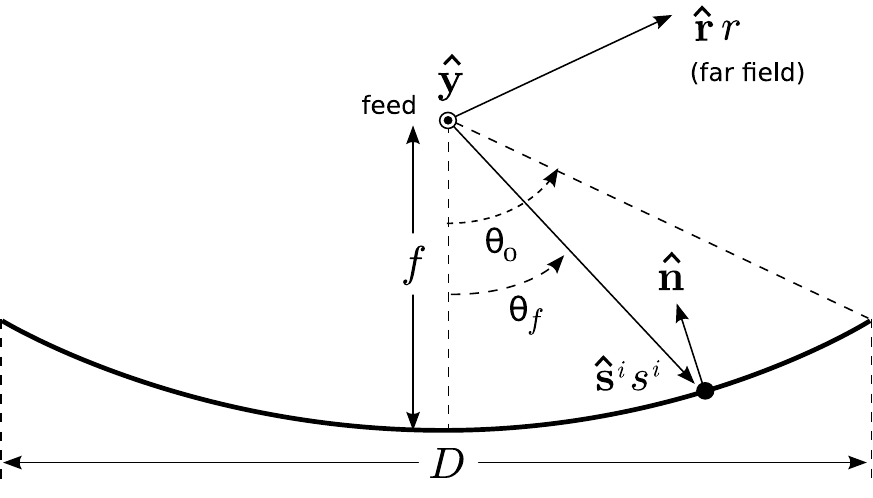}
    \caption{On-axis (top) and side (middle) views of an electronically-reconﬁgurable rim scattering system along with the geometry for analysis (bottom) assumed in this paper.}
    \label{fig:fig0}
    \vspace{1cm}
\end{figure}

The electric field intensity due to the reconfigurable portion of the dish is similarly written as
\be
\bfE_r^s(\psi) = -j\omega \mu_0 \frac{e^{-j\beta r}}{4\pi r} \int_{\theta_f=\theta_1}^{\theta_0} \int_{\phi=0}^{2\pi} \bfJ_1(\bfs^i) e^{j\beta \bfr(\psi)\cdot\bfs^i}ds
\ee
where the major differences between the contributions due to the fixed and reconfigurable portions of the dish are (a) the angles of integration and (b) the PO equivalent surface current distribution.  Due to the discrete nature of the reconfigurable surface, we can write $\bfE_r^s(\psi)$ as 
\be
\label{eq:field_reconfig}
\bfE_r^s(\psi) = -j\omega \mu_o \frac{e^{-j\beta r}}{4\pi r} \sum_n \bfJ_1(\bfs_n^i) e^{j\beta\bfr(\psi)\cdot\bfs^i_n}\Delta s
\ee
where $\bfJ_1(\bfs_n^i)=w_n \bfJ_0(\bfs_n^i) $ is the current distribution due to the $n$th element with complex-valued weight $w_n$.  These weights will be designed to cancel sidelobes in the H-plane co-pol pattern.  Thus, we are primarily concerned with the $y$-component of the vector $\bfE_r^s(\psi)$ and define the complex scalar $E_r^{s,co}(\psi)$ to be the $y$-component of the vector $\bfE_r^s(\psi)$.  Also, for convenience, we can write $E_r^{s,co}(\psi)$  in terms of two $N \times 1$ dimensional arrays $\bfe_\psi$ and $\bfw$ representing the co-pol portion of the electric field intensity without the influence of the elements and the complex-valued element gains respectively:
\be
E_r^{s,co}(\psi) = \bfe_\psi^T \bfw
\ee
where $\bfx^T$ is the transpose of the array $\bfx$, $w_n$ is the complex-valued weight applied to the $n$th reconfigurable element and the $n$th element of $\bfe_\psi$ is 
\be
 e_{\psi,n} = \left (-j\omega \mu_o \frac{e^{-j\beta r}}{4\pi r} \bfJ_o(\bfs_n^i) e^{j\beta \bfr(\psi)\cdot\bfs^i_n}\Delta s \right ) \cdot \hat{\bf y},
\ee
 where the dot product $\left (\bfe \right )\cdot \hat{\bf y}$ selects the $y$-component of the vector $\bfe$.  Note that both one-dimensional arrays are of length $N$ where $N$ is equal to the number of reconfigurable elements placed along the rim of the dish. 

Now to cancel the sidelobe gain at angle $\psi$, we wish 
\be
E_f^{s,co}(\psi) + E_r^{s,co}(\psi) = 0
\ee
or in other words, $E_r^{s,co}(\psi) = -E_f^{s,co}(\psi)$.  Thus, we wish to find the set of weights  $\bfw$ such that
\be
\bfe_\psi^T\bfw = -E_f^{s,co}(\psi)
\label{eq:req}
\ee
In the following section we describe techniques to determine $\bfw$.

\section{Open-Loop Weight Selection} \label{sec:weights}
The approach to find the appropriate weights to satisfy (\ref{eq:req}) depends on the restrictions placed on $\bfw$.  First, we describe how to determine the weights with no restrictions placed on $\bfw$.  Second, we will describe techniques for finding $\bfw$ if the weights are restricted to unit-modulus values (i.e., phase-shifting only).  Finally, we describe approaches to finding weights if we further restrict the weights to be discrete (i.e., come from a finite set).

\subsection{Unquantized Weights}
While (\ref{eq:req}) provides the requirement for creating a null at angle $\psi$, it does not provide the specific weights $\bfw$ to achieve this.
If the weights are unconstrained, to form a null at angle $\psi$ we can simply let $\bfw$ be equal to
\be
\bfw_{opt} = -E_f^{s,co}(\psi) \frac{\bfe_\psi^*}{||\bfe_\psi||^2_2}
\label{eq:opt}
\ee
where $\bfx^*$ represents a vector where each element is the complex conjugate of the corresponding element in $\bfx$ and $||\bfx||_2$ represents the 2-norm of the vector $\bfx$.  We will term these {\it optimal} weights since these weights guarantee that the response of the dish in the direction $\psi$ is zero.  Unfortunately, $\bfw_{opt}$ in general has elements with $|w_i|\neq 1$ which requires that the elements have controllable gains (i.e., can provide attenuation or gain to the scattered field). Such a requirement is undesirable from a cost and complexity perspective.  Thus, we seek to restrict the weights such that $|w_i|=1$.  This can be written as the following minimization problem
\begin{eqnarray} \label{eq:cost_unq}
     \bfw_{gp} = \min_{\bfw \in {\cal C}^N} & \left \| E_f^{s,co}(\psi) + \bfe_\psi^T\bfw  \right \|_2^2 & \label{eq:opt_single} \\
    &  s.t. \hspace{0.25cm} |w_i| = 1 & i =1, 2, \ldots N \nonumber 
\end{eqnarray}
This is a non-convex, complex-valued, constant-modulus, least squares optimization problem which is a special case of non-convex quadratically-constrained quadratic programming \cite{Luo10}.  While the problem is non-convex in general, simulation results suggest that the cost function defined in (\ref{eq:cost_unq}) has a number of good local minima. One solution to this problem is to use semi-definite programming \cite{Luo10}.  Unfortunately, this expands the problem such that it is of dimension $N^2$.  Given that for moderate to large antennas ($D=15$ m to $D=100$ m) and GHz frequencies, $N$  can range from $10^3$ to $10^5$, such an expansion is computationally expensive.  A more efficient solution is the use of {\it Gradient Projection} \cite{Tranter17}.  Gradient Projection (GP) uses standard gradient descent followed by a projection onto a convex set.  While our set of unit-modulus vectors is not a convex set, this approach has been shown to converge for projection onto the set of unit-modulus vectors \cite{Tranter17}.  Adapting GP to our problem, we have the following algorithm.  At the $k$th iteration we apply standard gradient descent to create a new vector 
\be
\tilde{\bfw}^{(k+1)} = \bfw^{(k)} + \alpha \bfe^* \left ( E_f^{s,co}(\psi) + \bfe^T\bfw^{(k)} \right )
\ee
where $\alpha = \frac{\gamma}{\lambda_{max}\left ( \bfe^* \bfe^T \right) }$ is the adaptation constant and $\gamma \in (0,1)$.  Of course, this is not guaranteed to provide unit-modulus weights.  Thus, we follow this with a projection step which projects $\tilde{\bfw}$ onto an array of weights with unit-modulus $\bfw$:
\be
\bfw^{(k+1)} = {\cal P} \left (\tilde{\bfw}^{(k+1)} \right )
\ee
where ${\cal P}(\bfx) = e^{j\angle \bfx}$ and the vector operator $\bfz =\angle \bfx$ results in a vector $\bfz$ where the $i$th element is the angle of $x_i$, i.e.,  $z_i=\angle x_i$ .  The detailed algorithm follows.

\begin{table}[h]
    \centering
    \begin{tabular}{l|l}
    \hline 
    \multicolumn{2}{c}{GP Algorithm (single constraint/null)} \\
    \hline
     1. & Initialization: $k=0$, $\alpha = \frac{\gamma}{\lambda_{max}\left ( \bfe^* \bfe^T \right) }$, $\gamma \in (0,1)$ \\
      & \hspace{2cm} $\bfw^{(0)} =e^{j\angle \left ((\bfe^*\bfe^T)^{-1}\bfe^*(-E_f^{s,co}(\psi)) \right )} $ \\
      & \\
         & Repeat \\
      2. & \hspace{1cm} ${\bf \eta}^{(k+1)} = \bfw^{(k)} - \alpha \bfe^* \left ( E_f^{s,co}(\psi) + \bfe^T\bfw^{(k)} \right )$ \\
      3. & \hspace{1cm} $\bfw^{(k+1)} = e^{j\angle {\bf \eta}^{(k+1)}}$ \\
      4. & \hspace{1cm} $k=k+1$ \\
         & Until convergence  \\
      \hline
    \end{tabular}
    \label{tab:my_label}
\end{table}

\subsection{Main Lobe Variation and Nulling Multiple Angles}
In the above formulation, we chose weights to cancel the sidelobe at a single angle.  However, there may be multiple angles that we wish to null. Further, as we will show later, nulling a particular angle may cause small changes to the main lobe gain (this was also shown in \cite{Ellingson21}).  This variation  may be tolerable in some applications, but in radio astronomy it can be a significant problem\cite{CondonRansom+2016}.  Thus, we wish to constrain this variation as much as possible.  To accomplish both goals (adding nulls and constraining main lobe variation) we simply need to incorporate these requirements in the cost function.  

For example, consider the case of $K$ desired nulls.  In this case we wish
\beq
E_f^{s,co}(\psi_1) + E_r^{s,co}(\psi_1) & = & 0 \nonumber \\ 
E_f^{s,co}(\psi_2) + E_r^{s,co}(\psi_2) & = & 0 \nonumber \\ 
& \vdots & \nonumber \\
E_f^{s,co}(\psi_K) + E_r^{s,co}(\psi_K) & = & 0 
\eeq
Or in other words we desire
\beq
 \bfe_{\psi_1}^T\bfw & = & -E_f^{s,co}(\psi_1) \nonumber \\ 
& \vdots & \nonumber \\
 \bfe_{\psi_K}^T\bfw & = & -E_f^{s,co}(\psi_K)   
\eeq
Defining 
\beq
\bfQ & = & \left [\bfe_{\psi_1}, \bfe_{\psi_2}, \ldots \bfe_{\psi_K}   \right ] \\ 
\bfy & = & -[E_f^{s,co}(\psi_1), E_f^{s,co}(\psi_2), \ldots E_f^{s,co}(\psi_K)  ]^T, 
\eeq
we have the requirement 
\be
\bfQ^T \bfw = \bfy
\ee
Using the least squares solution we obtain the optimal weights
\be
\bfw_{opt} = \bfQ \left ( \bfQ^T \bfQ\right )^{-1} \bfy
\ee

To avoid variation in the main lobe, we can include a constraint for the main lobe.  Specifically, for $\psi=0$ we require
\be
\bfe_{0}^T\bfw = \kappa
\ee
where $\kappa$ is the target constraint.  Ideally, we could set $\kappa = \bfe_{0}^T{\bf 1}_N$ where ${\bf 1}_N$ is an $N\times 1$ array of all ones which would provide the same main lobe gain as the fixed reflector.  With unconstrained weights, this is possible.  However, with weights restricted to unit modulus, satisfying both the main lobe constraint and side lobe constraints many not be possible.  One option is to ease the constraints by choosing $\kappa=0$, although this removes any contribution of the reconfigurable portion of the reflector to the main lobe.  Alternatively, we could choose $\kappa = \delta E_f^{s,co}(0)$ for some small value $\delta$ to provide some additional gain in the main lobe.  Experimentally we have found this latter approach to be successful. Thus, we can modify the vector of requirements to be
\be
\bfy = \left [\kappa,-E_f^{s,co}(\psi_1), -E_f^{s,co}(\psi_2), \ldots -E_f^{s,co}(\psi_K) \right ]^T
\ee
where $\psi_1, \psi_2, \ldots \psi_K$ are the $K$ angles at which we desired to place nulls and the first element of $\bfy$  corresponds to the constraint placed on the change to the main lobe gain.  Further, we define the matrix
\be
\bfA  = \left [ \bfe_{\psi_0},\bfe_{\psi_1}, \bfe_{\psi_2},\ldots \bfe_{\psi_K},\right ]^T
\ee
The least squares problem then becomes:
\begin{eqnarray}
   \bfw_{gp} =  \min_{\bfw \in {\cal C}^N} & \left \| \bfy + \bfA \bfw  \right \|_2^2 & \label{eq:opt_multiple} \\
    &  s.t. \hspace{0.25cm} |w_i| = 1 & i =1, 2, \ldots N \nonumber 
\end{eqnarray}

The resulting algorithm is then

\begin{table}[h]
    \centering
    \begin{tabular}{l|l}
    \hline 
    \multicolumn{2}{c}{GP Algorithm (multiple constraints)} \\
    \hline
     1. & Initialization: $k=0$, $\alpha = \frac{\gamma}{\lambda_{max}\left ( \bfA^H \bfA \right) }$, $\gamma \in (0,1)$ \\
      & $\bfw^{(0)} =e^{j\angle \left ((\bfA^H\bfA)^{-1}\bfA^H\bfy \right )} $ \\
      & \\
       & Repeat \\
      3. & \hspace{1cm} ${\bf \eta}^{(k+1)} = \bfw^{(k)} - \alpha \bfA^H \left ( \bfy + \bfA \bfw^{(k)} \right )$ \\
      4. & \hspace{1cm}  $\bfw^{(k+1)} = e^{j\angle {\bf \eta}^{(k+1)}}$ \\
      5. & \hspace{1cm}  $k=k+1$ \\
        & Until convergence  \\
      \hline
    \end{tabular}
    \label{tab:my_label2}
\end{table}
\subsection{Quantized Weights}
The weights described in the previous sections have one primary disadvantage: they presume continuously-variable phase values.  In a practical implementation, it is much more reasonable that only a finite number of phase values would be available on each reconfigurable element.  Thus, we wish to solve the single-null problem in (\ref{eq:opt_single}) or the multiple-null problem (\ref{eq:opt_multiple}) with the additional constraint that 
\be
w_i \in {\cal W}=\left \{ e^{j2\pi/M},e^{j4\pi/M}, \ldots e^{j2\pi(M-1)/M} \right \}.
\ee
where $M$ is the number of possible phase values.  The most straightforward approach is to simply search over all  vectors $\bfw\in {\cal W}$ to find the one that minimizes the cost function in (\ref{eq:opt_single}) for a single constraint or (\ref{eq:opt_multiple}) for multiple constraints.  Unfortunately, this requires a search over $M^N$ possible vectors where (as discussed above) $N$ can be on the order of $10^3$ to $10^5$ for moderate to large antennas and GHz frequencies.   For example, in the particular case we will examine $N=2756$ (see Section \ref{sec:results} and \cite{Ellingson21} for details). Thus, even using binary weights a brute-force approach results in a search over $2^{2756}$ possible values of $\bfw$ which is obviously infeasible.  

\rev{ As a result, it is clear that a meta-heuristic approach is required.  A number of such approaches have been developed that use some form of randomness in the search.  In particular, nature-inspired approaches \cite{yang10} such as evolutionary approaches \cite{Lundardi18}, simulated annealing \cite{russell2002artificial}, ant colony approaches \cite{dorigo1999ant}, swarm optimization \cite{chen2009novel},  and firefly algorithms \cite{Li15,yang10,Lundardi18,FISTER201334,sayadi10} have demonstrated good performance.  In our own investigation, we found that the Firefly Algorithm provided the best trade-off between computational complexity and performance.  Firefly algorithms imitate the behavior of fireflies where each firefly moves towards other fireflies with higher intensity.  The attractiveness of one firefly to another depends both on the stronger firefly's intensity as well as the distance between them.  Specifically, we adapt the Firefly Algorithm as described below.}



\begin{table}[h]
    \centering
\rev{
\begin{tabular}{l|l}
    \hline 
    \multicolumn{2}{c}{Firefly Algorithm} \\
    \hline
     1. & Initialization: $N_{pop}$, $M$, $\gamma$ \\
      & \hspace{1.5cm} $\bfw_i^{(0)}$ randomly chosen from ${\cal W}^N$  \\
      & \hspace{1.75cm}  $\forall i\in \{1, \ldots N_{pop}\}$ \\
      2. & Calculate $I_i$ for $i=1, \ldots N_{pop}$ \\
      3. & Rank fireflies such that $I_i>I_{i-1} \forall i$ \\
       4. & for $iter=1:M$ \\
      5.   & \hspace{0.25cm} for $i=1:N_{pop}-1$\\
      6. & \hspace{0.5cm} for $j=i+1:N_{pop}$\\
      7. & \hspace{0.75cm} if $I_j>I_i$ \\
      8. & \hspace{1cm} Calculate distance between fireflies $d_H(i,j)$ \\
      9. & \hspace{1cm} Calculate attractiveness $\beta = \frac{1}{1+\gamma d_H(i,j)}$ \\
      10. & \hspace{1cm} for all elements where $w_j(k) \neq w_i(k)$  \\
         & $\hspace{1.5cm}$ with probability $p=\beta$, $w_j(k) = w_i(k)$   \\
        & \hspace{0.75cm} endif  \\
        & \hspace{0.5cm} end for ($j$) \\ 
        & \hspace{0.25cm} end for ($i$) \\ 
      11. & \hspace{0.25cm} Randomly move brightest firefly \\
          & \hspace{0.5cm} (keep movement if intensity increases) \\
     12.  & \hspace{0.25cm} Calculate intensity $I_i$ for $i=1, \ldots N_{pop}$ \\
      13.  & \hspace{0.25cm} Rank fireflies such that $I_i>I_{i-1} \forall i$   \\
       & end for (iter) \\
      \hline
    \end{tabular}
    }
    \label{tab:my_label3}
\end{table}

\rev{ The Firefly Algorithm begins by defining several parameters including the firefly population $N_{pop}$, the number of total movements (i.e., iterations) $M$, and the intensity parameter $\gamma$.  The initial population of fireflies have positions (i.e., weight vectors ${\bf w}_i$, $i=1\ldots N_{pop}$) that are randomly chosen from the set ${\cal W}^N$.  For each firefly an intensity value is calculated where the intensity is the inverse of the cost function, $I_i =1 /C({\bfw_i})$ where $C \left ( \bfw_i \right )=\left \| E_f^{s,co}(\psi) + \bfe_\psi^T\bfw_i  \right \|_2^2$ for a single constraint and  $ C \left ( \bfw_i \right )=\left \| \bfy + \bfA\bfw_i  \right \|_2^2$ for multiple constraints.  Fireflies are then ranked according to increasing intensity.  For each movement (iteration), fireflies are moved starting with the firefly with the lowest intensity. Each firefly moves towards fireflies with higher intensity.  The amount of movement is inversely related to the Hamming distance between the two fireflies. Specifically, a probability value $\beta$ is calculated that is inversely related to the Hamming distance.  With probability $\beta$, the firefly modifies each weight to be equal to the higher intensity firefly.  After all fireflies have been moved, the highest intensity firefly moves randomly, maintaining that movement only if it results in a higher intensity. } 

\rev{Overall, the complexity of the algorithm is 
\be
{\cal O}\left ( M \left ( \frac{N_{pop}(N_{pop}+1)}{2} + N_{pop} \log_2(N_{pop}\right ) \right) 
\ee
where the $M$ is the number of movements, first term related to the population size is due to the movement of each firefly in the direction of the higher intensity fireflies and the second term related to the population size is due to sorting of intensity values.  Generally, increasing $N_{pop}$ and $M$ improves performance, but also increases complexity.  In the results shown below, we set the population size to $N_{pop}=50$ and the number of movements to $M=600$ as these values provided a reasonable trade-off. }


\subsection{Results}\label{sec:results}
To demonstrate the performance of the above algorithms, in this section we provide numerical results.  We will assume a $D = 18$ m paraboloidal reflector operating at 1.5 GHz where the outer
0.5 m of the reﬂector surface consists of 2756 contiguous reconfigurable elements.  Each element is a square ﬂat plate conformal to the  paraboloidal surface having side length of approximately $0.5\lambda$ ({\it i.e.}, an area $\Delta s \approx 0. 25\lambda^2$).  The feed is modeled as a $\hat{y}$-oriented electrically-short electric dipole with field additionally modiﬁed by the factor $(\cos \theta_f )^q$ ,where $q$ controls the directivity of the feed.  Setting $q = 1.14$ yields edge illumination (i.e., ratio of ﬁeld intensity in the direction of the rim to the ﬁeld intensity in the direction of the vertex), to approximately $-11$ dB, yielding aperture efﬁciency of about $81.5$\%.  Figure \ref{fig:fig1} presents the H-plane co-pol pattern of an 18 m fixed reflector (i.e., without the reconfigurable elements or equivalently with the reconfigurable elements set to $w_i=1, \forall i$).  Also plotted is the same pattern for the reconfigurable reflector with the optimal weights (i.e., infinite quantization, and no unit modulus constraint) defined in (\ref{eq:opt}).  The weights are set to place a null at $\psi = 1.25^\circ$ which is directly on the peak of the first sidelobe in the fixed reflector pattern.  We can see from the figure that the sidelobe at $\psi=1.25^\circ$ is indeed cancelled.  However, the main lobe gain is reduced from 48.1 dBi (for the fixed 18m dish) to 47.7 dBi for the reconfigurable dish.   

The pattern in Figure \ref{fig:fig1} resulted from weights generated to place a null at $\psi=1.25^\circ$.  As a result, the gain at $\psi=1.25^\circ$ is essentially zero. This level of cancellation can be accomplished at any angle outside the main beam as shown in Figure \ref{fig:optimal_null}. Specifically, in Figure \ref{fig:optimal_null} we plot the H-plane co-pol gain achieved at $\psi$ (we will refer to this as $G(\psi)$) when the weights are generated using (\ref{eq:opt}) to place a null at $1^\circ\leq \psi \leq 3^\circ$ (see plot labeled ``optimal'').  Note that this is not a pattern, but rather the resulting null depth when attempting to place a null at $\psi$.    It can be seen that the optimal weights provide a gain of zero to within machine precision  for any angle between $1^\circ$ and $3^\circ$,  

Similarly, the resulting main lobe gain when cancelling a single sidelobe at angle $\psi$ is shown in Figure \ref{fig:main_lobe}.  It can be seen that the main lobe gain can vary by over 0.1 dB when cancelling with the optimal weights.  As discussed earlier, this variation can be problematic.  

\begin{figure}
    \centering
    \vspace{1cm}
    \includegraphics[trim={5cm 8cm 5cm 10cm}, width=0.75\columnwidth]{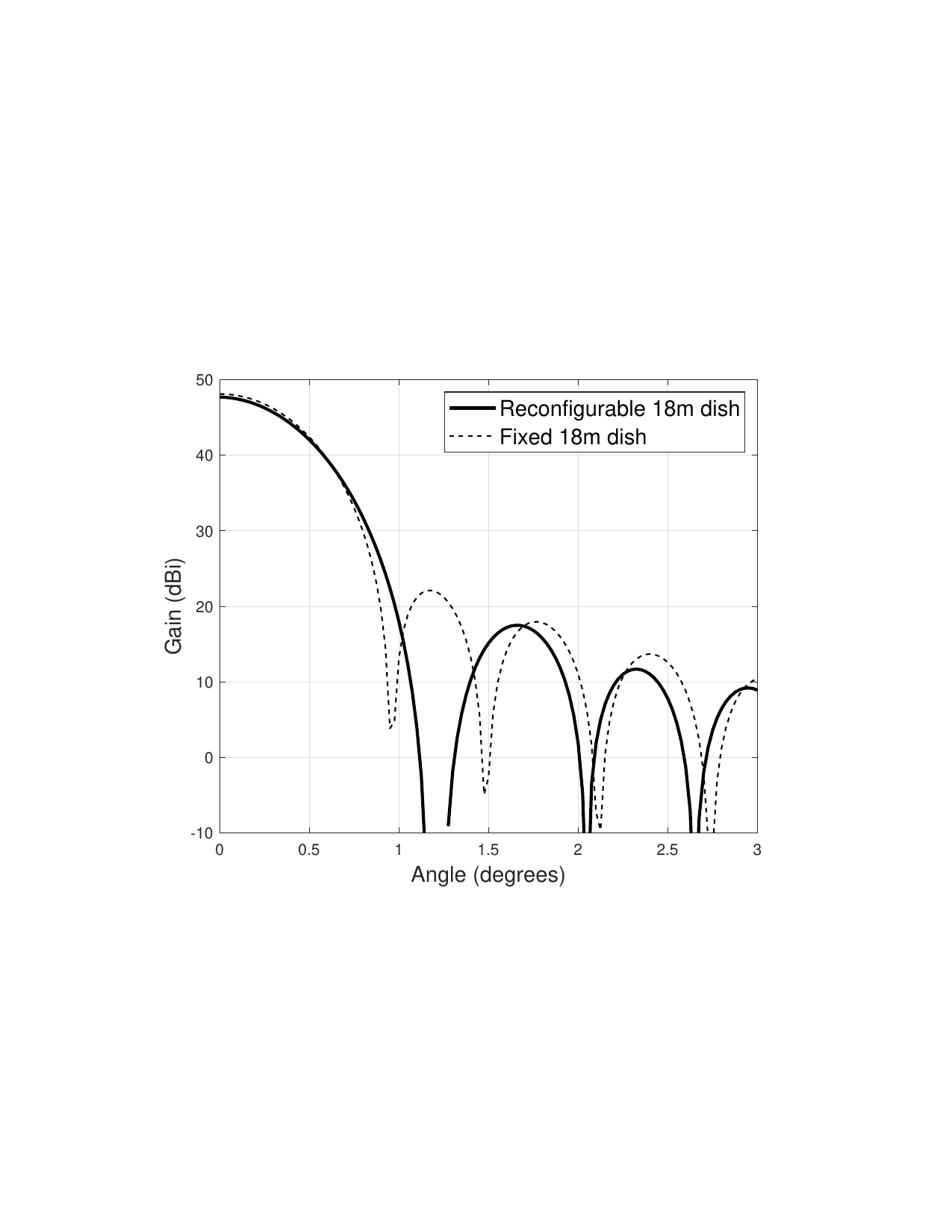}
    \caption{H-plane co-pol pattern for traditional (fixed) 18m dish and reconfigurable 18m dish with 0.5m reconfigurable rim ($\psi=1.25^o$)}
    \label{fig:fig1}
    \vspace{1cm}
\end{figure}

\begin{figure}
    \centering
    \includegraphics[trim={5cm 8cm 5cm 10cm}, width=0.75\columnwidth]{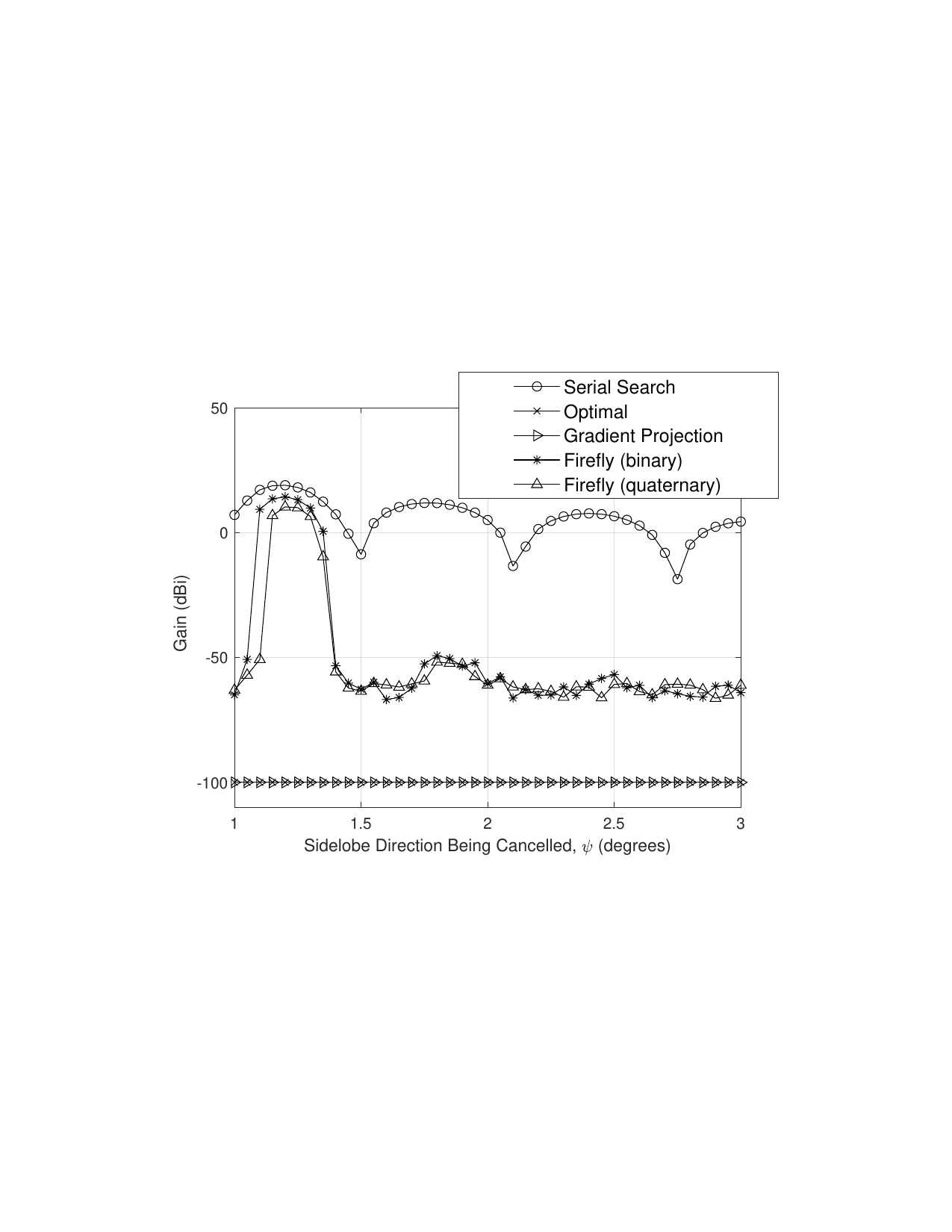}
    \caption{\rev{H-plane co-pol pattern gain $G(\psi)$ when placing a null at angle $\psi$ using the proposed algorithms and the serial search approach for binary weights in \cite{Ellingson21}.  Note that the gains for ``optimal'' and ``Gradient Projection'' were measured as zero to the within the limits of machine precision.  Thus, to visualize these values, we simply mapped them to $-100$ dBi for plotting purposes.} }
    \label{fig:optimal_null}
\end{figure}

Further, and as also discussed earlier, the optimal weights as determined by (\ref{eq:opt}) are not unit modulus and thus would require reconfigurable elements with controllable gain.  The gain achieved at the null direction when using the weights defined by the least-squares problem described in (\ref{eq:opt_single}) and determined using the Gradient Projection algorithm are also plotted in Figure   \ref{fig:optimal_null} (weights labeled ``Gradient Projection'').  It can be seen that the unit-modulus weights also achieve a gain of essentially zero. The corresponding gain is also plotted in Figure  \ref{fig:main_lobe}.  The additional constraint of the unit modulus causes more mainlobe variation as the angle of sidelobe cancellation ($\psi$) is changed.  Specifically, the main lobe gain will vary by approximately 0.3 dB  $\approx 7\%$ which may be significant in radio astronomy applications.  We will address this limitation shortly.  Of course, these weights, while unit modulus, are not quantized and thus require the elements to have continuously-variable phase.  

\rev{ The null depth achievable when using binary ($M=2$) weights (via the serial search algorithm described in \cite{Ellingson21} and the Firefly Algorithm) or quaternary weights (via the Firefly Algorithm)} are also shown in Figure \ref{fig:optimal_null}.  When using the serial search approach from \cite{Ellingson21}, the binary weights appear to get stuck in a local minimum that limits the null achievable\footnote{The exact reason for the performance is not known at this time.  Slightly different results are typically achieved if the search is started at different places or computed in a different order.}.  The minimum gains are in the range of 0-5 dBi which is a reduction in fixed pattern gain of only  5-15 dB. On the other hand, \rev{ when using the Firefly Algorithm, the quantized weights can push the sidelobe gains down to approximately $-50$ dBi, with the exception of a $0.2^\circ$ range around the maximum sidelobe peak at $\psi=1.25^\circ$.}  In that range, there is insufficient granularity in the weights to completely null the power from the fixed portion of the dish.  On the other hand, the main lobe gain has roughly 0.2 dB of variation when using Firefly and 0.3 dB of variation when using the serial search, although the latter provides a higher main lobe gain by nearly 0.5 dB.  Note that this difference in main lobe gain can be reduced by changing the main lobe constraint.

\begin{figure}
    \centering
    \vspace{1cm}
    \includegraphics[trim={5cm 8cm 5cm 10cm}, width=0.75\columnwidth]{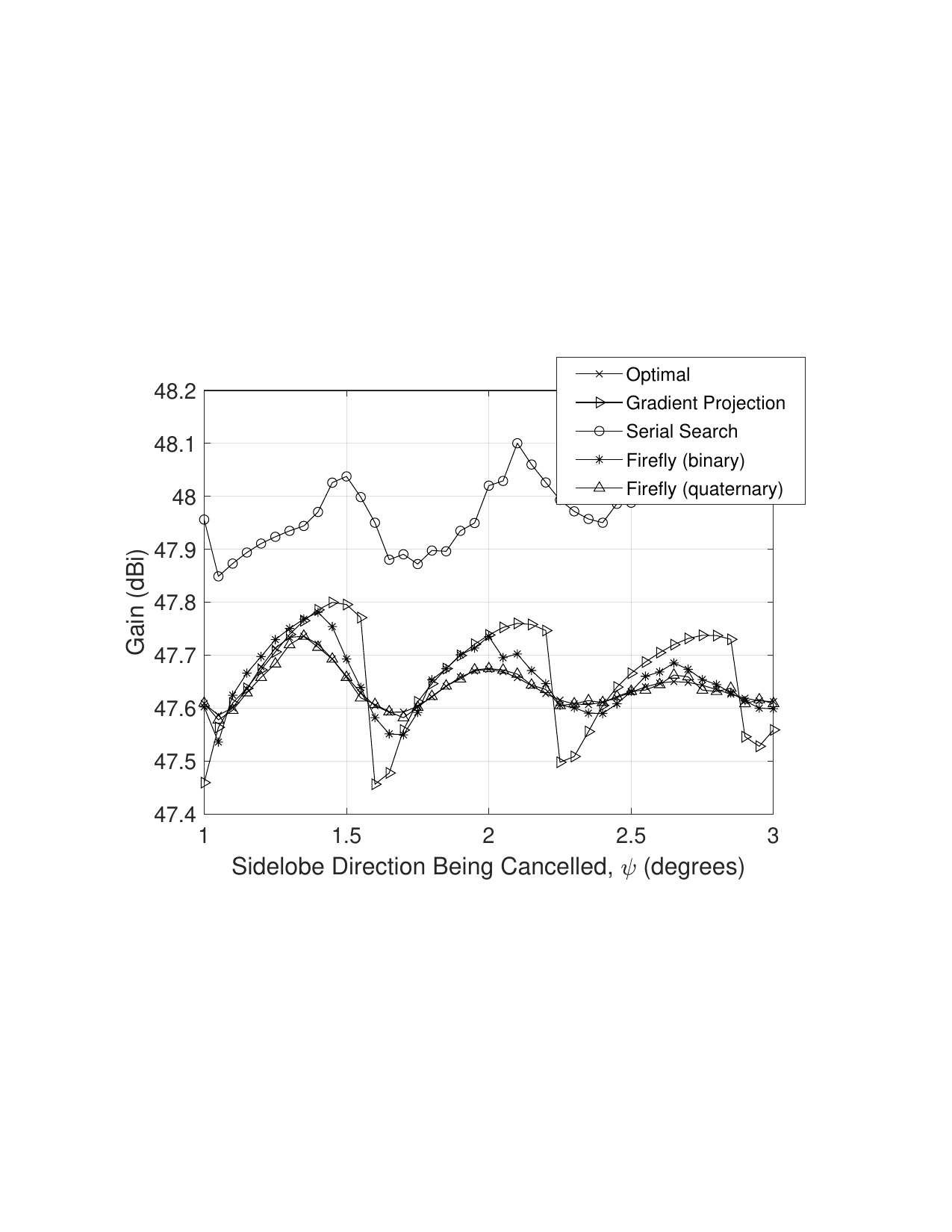}
    \caption{\rev{Main lobe gain for H-plane co-pol pattern ($G(0)$) when cancelling sidelobe at $\psi$ using the proposed algorithms and the serial search approach for binary weights in \cite{Ellingson21}}}
    \label{fig:main_lobe}
\end{figure}

To demonstrate the performance of the algorithms to create multiple nulls, we first examine the continuously-variable phase case using gradient projection to solve the least squares problem of (\ref{eq:opt_multiple}).  As an example, we applied the algorithm to three angles: $\psi=1.0^o$, $\psi=2.0^o$ and $\psi=3.0^o$.  The resulting pattern is plotted in Figure \ref{fig:fig4} along with the pattern of an 18m dish with a fixed pattern.  We can see that the reconfigurable rim provides nulls at all three angles, as desired.  There is a small loss in main lobe gain ($\sim$0.1 dB), but the two nulls are sufficiently deep.

\begin{figure}
    \centering
    \vspace{1cm}
    \includegraphics[trim={5cm 8cm 5cm 10cm}, width=0.75\columnwidth]{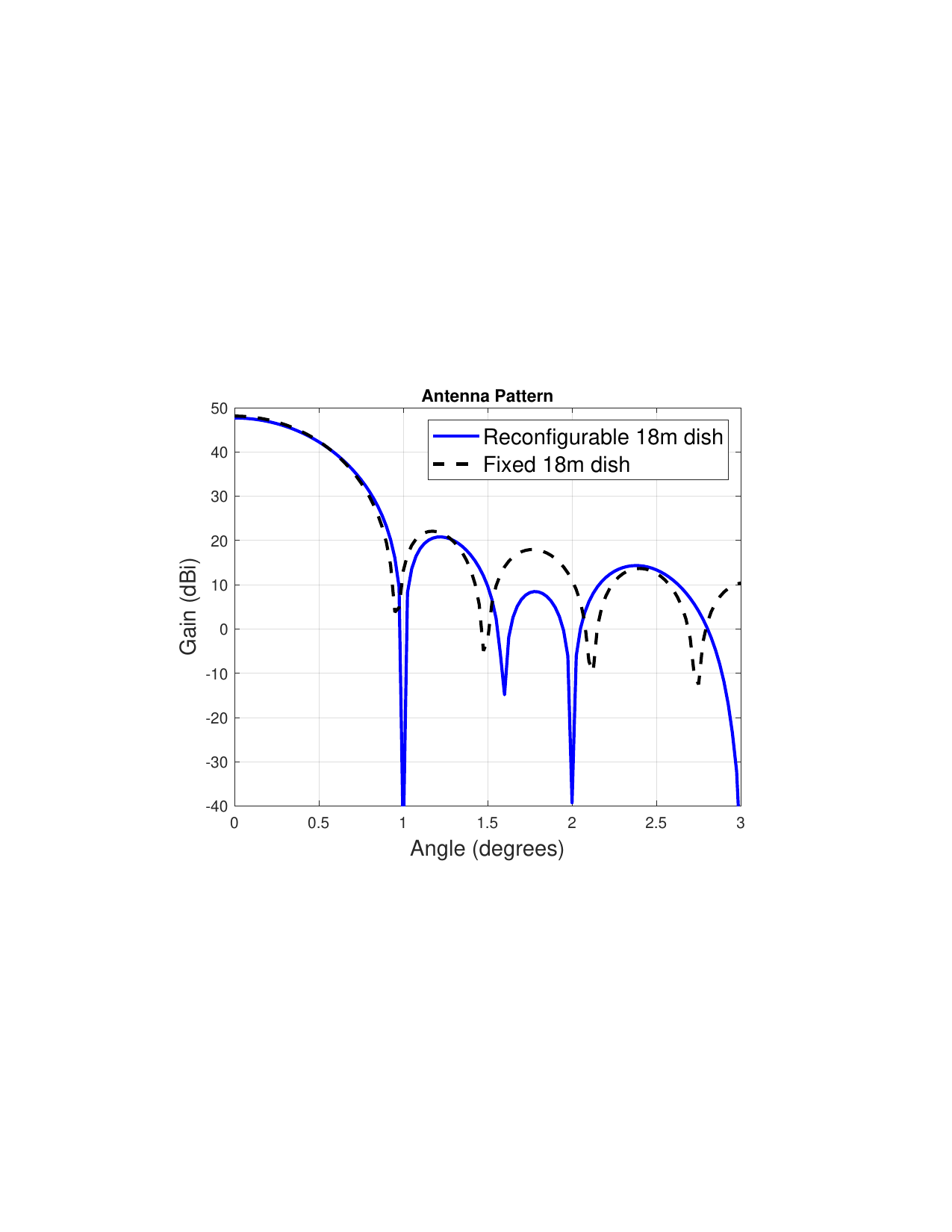}
    \caption{\rev{H-plane co-pol pattern for standard (fixed) 18m dish and reconfigurable 18m dish with 0.5m reconfigurable rim (continuously variable phase, main lobe control, $\psi=1.0^\circ, 2.0^\circ, 3.0^\circ$)}}
    \label{fig:fig4}
\end{figure}

As a last examination of the performance of the algorithms, we now examine the ability of the approach to mitigate main lobe gain variation seen in Figure \ref{fig:main_lobe}.  First, we use the gradient projection algorithm (as described above) with two constraints.  The first constraint is to maintain the main lobe gain at a constant value while the second constraint is to cancel the sidelobe at the angle $\psi$.  It was found that the main lobe gain is held constant at 47.8 dBi regardless of the angle of the null being created.  At the same time, to within machine precision, the gain at the null direction can be set to zero.

If quantized weights are used, the results are not quite as good as shown in Figure \ref{fig:fig6}.  While the main lobe gain is still maintained with strong consistency, the null depth is not as low as with  continuously-variable phase weights.  This is certainly expected based on the results of Figure \ref{fig:optimal_null}.  The quantized weights do not have the same capability to create deep nulls.  This is likely due to the fact that quantization limits the solution space and thus weights may not exist that can accomplish both constant main lobe gain and providing a deep null at a specific direction.  This is particularly true around the largest sidelobe at $1.25^o$ as was shown above.  As an example quaternary (i.e., $M=4$) weights were generated for main lobe control and a null at $\psi=1.75^\circ$.  The resulting the H-plane co-pol  pattern is plotted in Figure \ref{fig:fig7} along with the fixed reflector co-pol pattern.  As expected, a deep null is placed at $1.75^\circ$ while maintaining the desired main lobe gain.  

As an additional check on the impact of modifying the pattern, we also examined the and cross-pol pattern.  We can see in Figure \ref{fig:fig7} that while the weights do degrade the cross-pol pattern  (which is zero in the H-plane in the non-reconfigurable system) the cross-pol gain does not exceed $-22$ dBi. Further, it is possible that with proper constraints, this cross-pol degradation could also be controlled.


\begin{figure}
    \centering
    \vspace{1cm}
    \includegraphics[trim={5cm 8cm 5cm 10cm}, width=0.75\columnwidth]{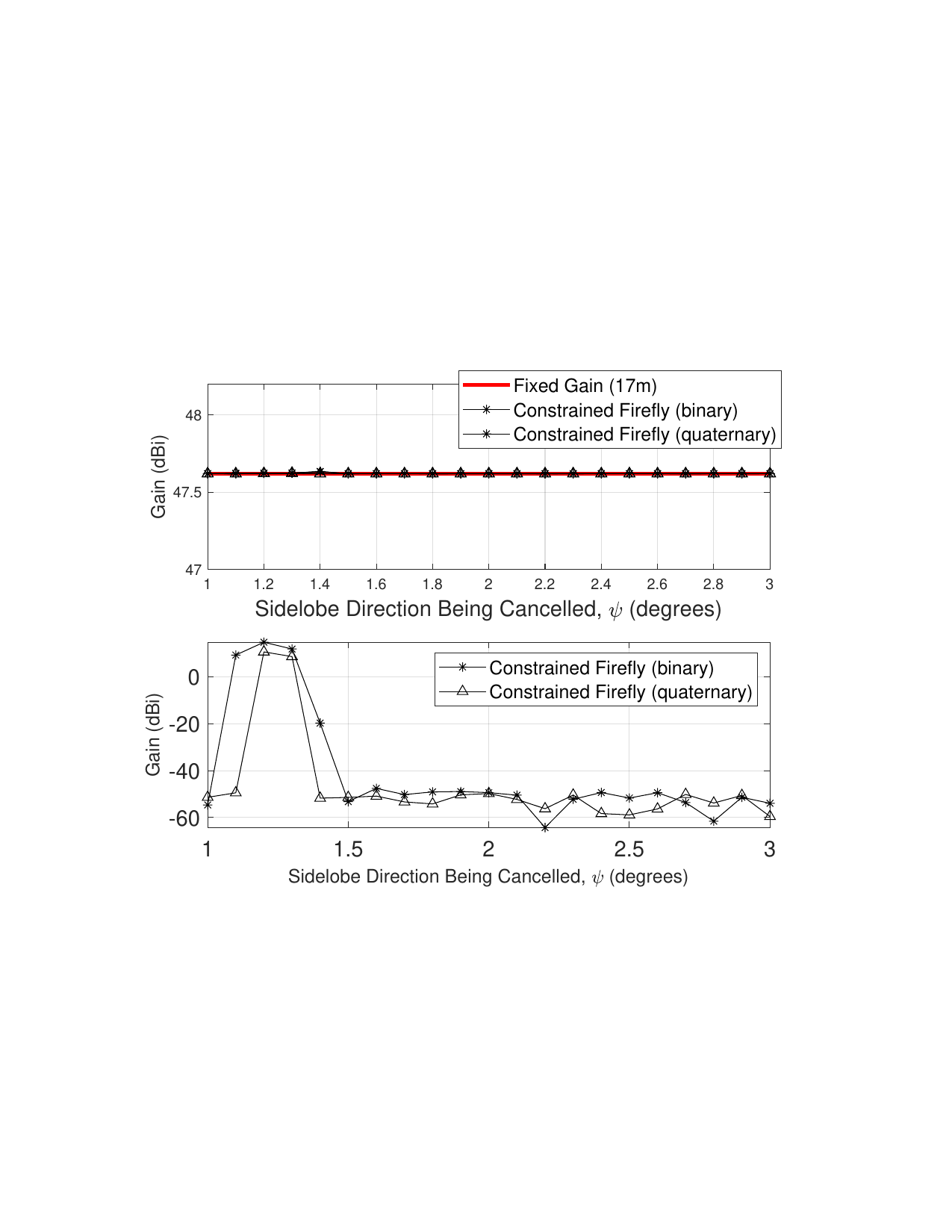}
    \caption{\rev{Main lobe gain $G(0)$ (top) and sidelobe gain (bottom) when cancelling sidelobe at $\psi$ while enforcing a main lobe constraint using Firefly and quantized weights ($M=2,4$) } }
    \label{fig:fig6}
\end{figure}

\begin{figure}
    \centering
    \vspace{1cm}
    \includegraphics[trim={5cm 8cm 5cm 10cm}, width=0.75\columnwidth]{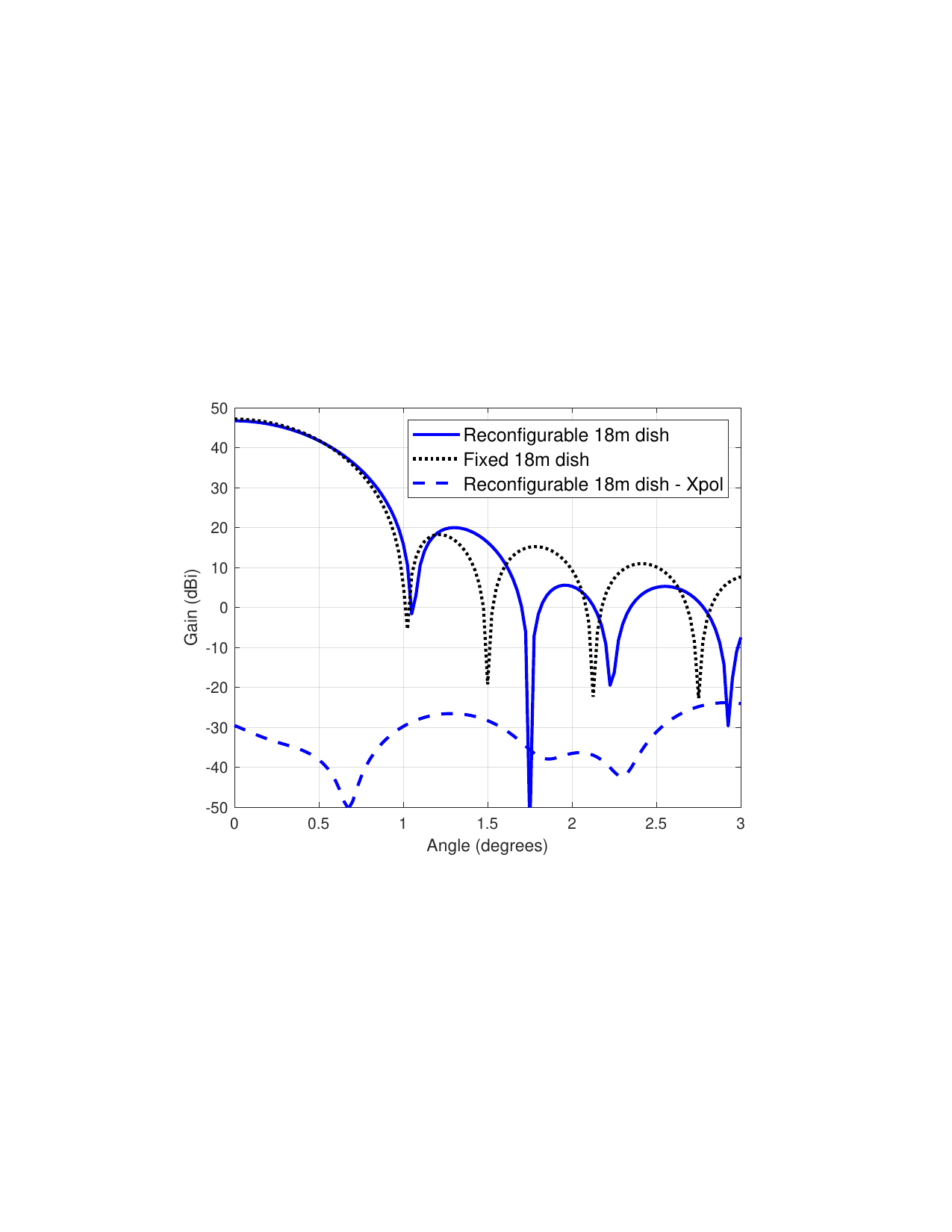}
    \caption{\rev{H-plane co-pol pattern for standard (fixed) 18m dish and H-plane co-pol and cross-pol patterns for reconfigurable 18m dish with 0.5m reconfigurable rim and quaternary Weights (main lobe control, $\psi=1.75^\circ$)}}
    \label{fig:fig7}
\end{figure}

\subsection{Frequency and Bandwidth Sensitivity}
In the examples above, the weights were chosen to optimize performance at a specific frequency.  In particular, the examples considered a frequency of $f=1.5$ GHz.  This raises the question as to whether or not the chosen weights are still effective at other frequencies. To explore the sensitivity to frequency, we used \rev{the Firefly Algorithm described above to find the optimal weights with $M=4$ for $\psi=1.75^\circ$.  We then examined the gain for $1.42{\text{ GHz}}\leq f \leq 1.58{\text{ GHz}}$.  The gain over this range of frequencies is shown in Figure \ref{fig:freq_sens} (see plot with 'x' markers).  From the plot we can see that although the null at the desired angle is reasonably deep (-50dBi) at the desired frequency (1.5 GHz), the null rises quickly as the frequency deviates the the design value.  For example, the null is only 1dB when the frequency deviates by 40MHz and 8dB at 80MHz from the design point. }

\begin{figure}
    \centering
    \vspace{1cm}
    \includegraphics[trim={5cm 8cm 5cm 10cm}, width=0.75\columnwidth]{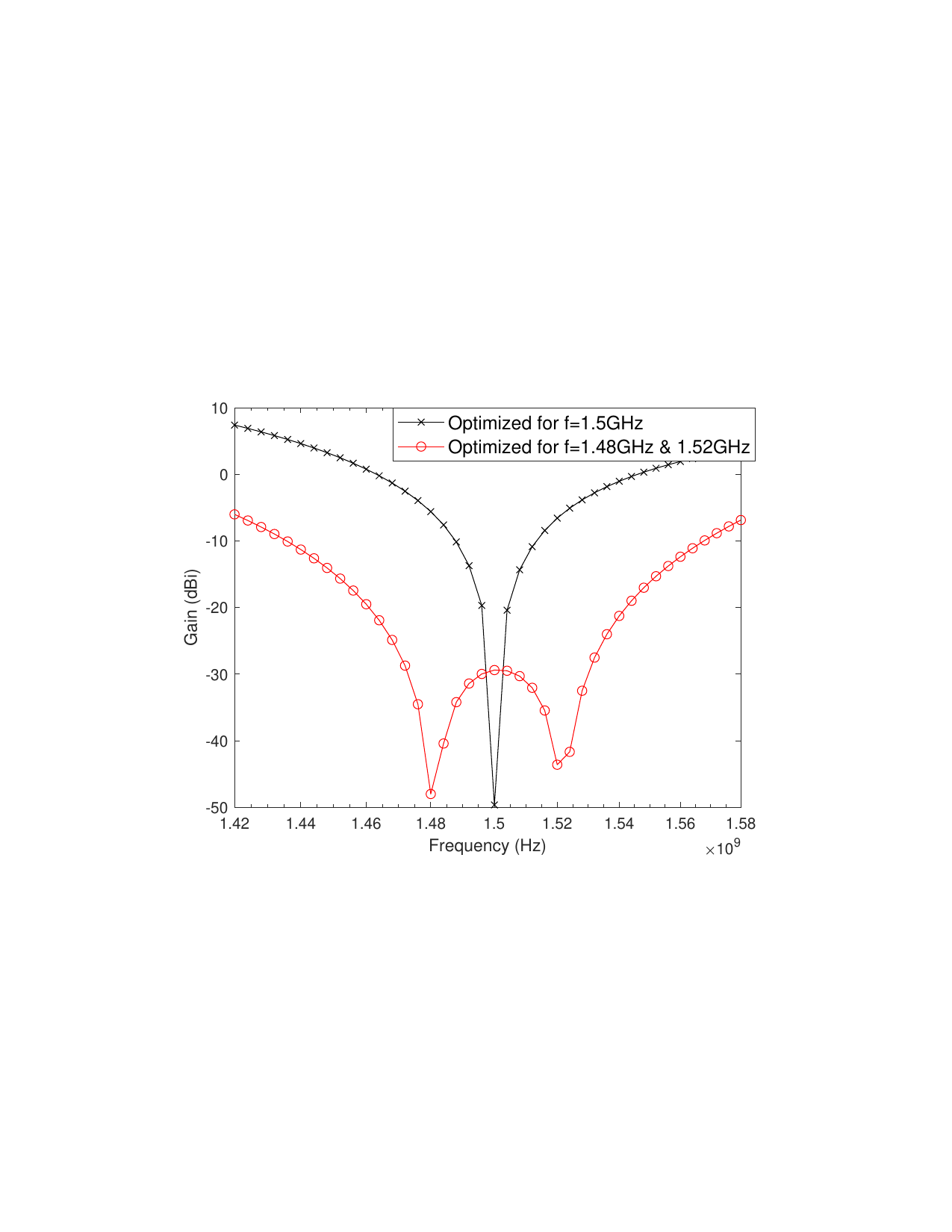}
    \caption{ \rev{Null depth for reconfigurable 18m dish with 0.5m reconfigurable rim and quaternary weights as a function of frequency ($\psi=1.75^\circ$)}}
    \label{fig:freq_sens}
\end{figure}

There are two ways that this could be improved, depending on the application needs.  If the application simply needs to change the center frequency of the null, but the signal of interest is narrowband, new weights could simply be generated at the new frequency.  In such a case, new weights could be found more quickly by starting from the weights at the current frequency.  The improvement in convergence time can be seen in Figure \ref{fig:freqconv}.  Specifically, in the figure three plots are shown.  The first plot (blue 'x') shows the gain achieved at each iteration of the weights when starting from a random set of weights and optimizing for $f=1.5$ GHz.  As seen, {approximately 290 movements (i.e., $M=290$)} are needed to reach a null depth of approximately $-60$ dBi.  Now, if we desire new weights optimized for $f=1.42$ GHz (80MHz away from the original frequency), we can start with the weights for $1.5$ GHz.  This shortens the convergence time from approximately \rev{290 movements to approximately 50 movements}.  Further, if the change in frequency is only 10MHz (i.e., $f=1.49$ GHz), the convergence time is similar.

\begin{figure}
    \centering
    \vspace{1cm}
    \includegraphics[trim={5cm 8cm 5cm 10cm}, width=0.75\columnwidth]{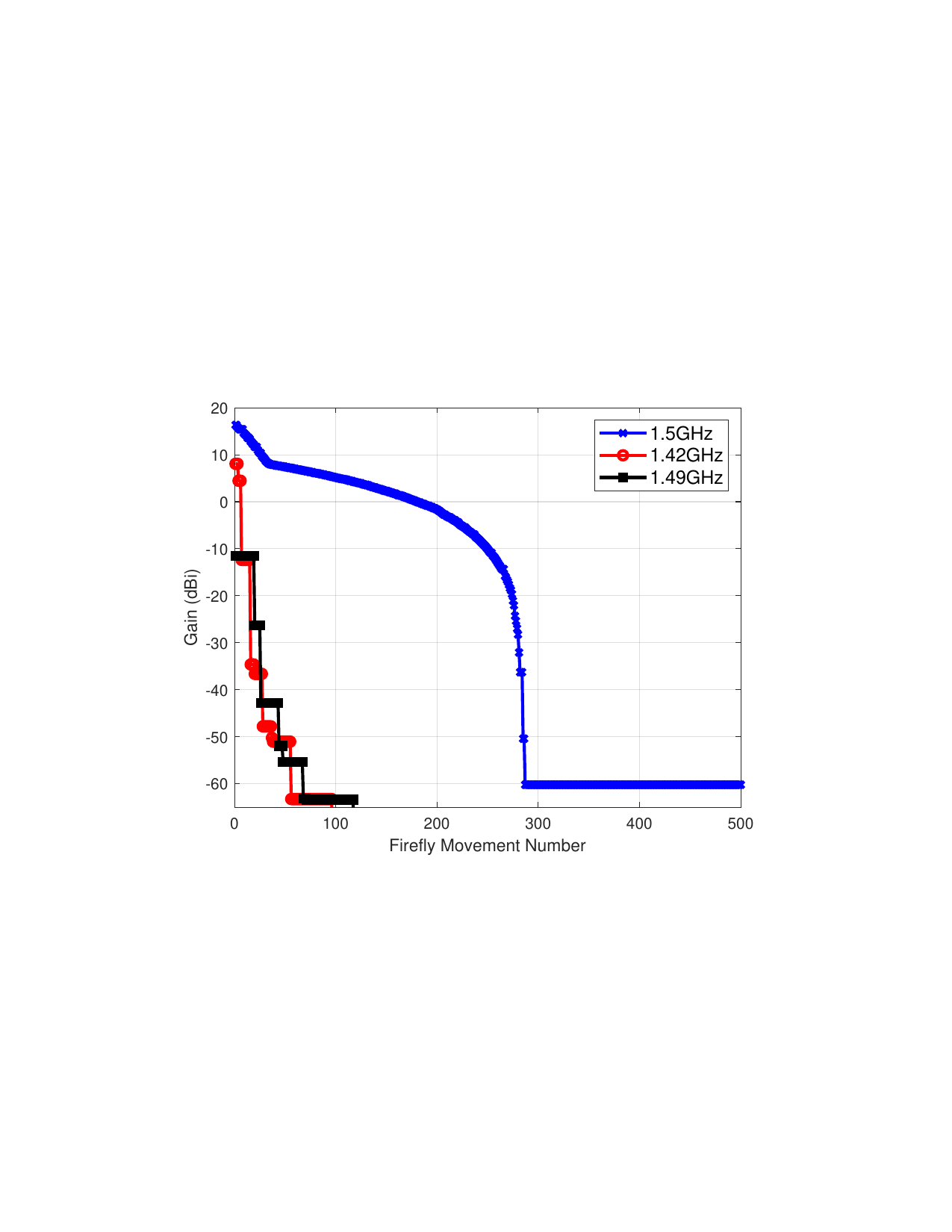}
    \caption{\rev{Convergence of quaternary Weights with simulated annealing when optimizing for different frequencies.  The optimization process for $f=1.5$ GHz starts from a random starting point while the optimization for $f=\{1.42, 1.49\}$ GHz start from the optimal weights for $f=1.5$ GHz.  ($\psi=1.75^\circ$)}}
    \label{fig:freqconv}
\end{figure}

If the application requires not simply a change in frequency, but instead a wider bandwidth, a second approach can be used.  Specifically, the optimization can be applied to multiple frequencies simultaneously.  An example of this approach is shown in Figure \ref{fig:freq_sens}.  In this example, the weights were optimized for $f=1.52$ GHz and $f=1.48$ GHz simultaneously.  As can be seen, the weights provide a null below $-30$ dBi over a bandwidth of approximately 40MHz.

\section{Closed-Loop Weight Selection} \label{sec:closedloop}
While we have shown that element weights for the reconfigurable portion of the dish can be found for placing nulls, the algorithm depends on the antenna pattern being well understood.  In practice  the antenna pattern is sensitive to perturbations and unmodeled effects and thus will be known with limited accuracy.  This will make  the field equations given above inaccurate. 

An alternative approach exists. Rather than evaluating the field equations for each considered set of weights, the weight selection algorithm can make decisions by measuring the interference power at the output while applying a specified set of weight vectors.  This ``closed-loop'' technique occurs in real-time and thus relies on accurate decision-making when evaluating the impact of different element weights  to mitigate interference power. Additionally, if the interference source is moving, the optimal weights will change over time and can be tracked by this approach.  In this section, we first determine the probability of error for the underlying decision process, and then provide simulation results obtained via a closed-loop optimization approach based on simulated annealing under several different scenarios.  Since convergence time is a major issue with the closed-loop approach, we then examine a second (faster) approach that uses a library of weight vectors which are measured during a calibration phase.

\subsection{Closed-Loop Simulated Annealing}
In order to address the closed-loop approach, we model the scenario as follows. 
We assume that the radio telescope employs some initial weight vector and is pointed at an astronomical source which has $S/N << 1$ in the bandwidth of interest and integration time (which is typically true).  Thus, the signal of interest can  be ignored for the purposes of nulling the interference. 
An interfering source is located outside the main lobe, but within the first few sidelobes. 
We also assume that a signal containing the interfering signal can be obtained from the back-end electronics. By integrating over a number of noise-corrupted received samples,  an estimate of the received interference power due to the current element weights  can be made. This interference power measurement is used as a decision metric in conjunction with a simulated annealing-based search as will be described. \rev{Note that population-based approaches are not desirable  in the closed-loop case since we must evaluate each set of weights in real time.  Since there is a high probability that some members of the population will not be good weights, each iteration of the population will have some values that will degrade performance.  Thus, we apply a version of simulated annealing.}

We can represent each of the $N$ received received samples $x[k]$ as 
\begin{equation}
    \label{eq:closed_loop_sig}
    x[k] = \sqrt{G_\psi(\bfw^{(t)})}z[k] + n[k]
\end{equation}
where $G_{\psi}( \bfw^{(t)})$ is the gain of the antenna in the direction of the interferer in the direction $\psi$ due to weights $\bfw^{(t)}$, $z[k]$ is the interference waveform and $n[k]$ denotes additive white Gaussian noise with variance $\sigma_n^2$. 

The software for the weight control algorithm must make measurements and estimate the power of the interfering source based on the current weights. 
However, the power estimate will be imprecise depending on the interference-to-noise ratio (INR). 
Because of this, the simulated annealing process will suffer from incorrect decisions. 
More specifically, simulated annealing involves comparing the previous cost $C(\bfw^{(t)})$ against a proposed cost $C(\bfw^{(tmp)})$. 
Previously, the cost was defined with respect to the gain provided in a specific direction due to a given set of  weights applied to the reconfigurable elements of the dish. 
Here, we propose the use of an output power  measurement for the cost.

\subsubsection{Decision Probabilities}
 
Let the interfering signal $z[k], k=1:N$ represent $N$ samples of a zero-mean Gaussian random variable with variance $\sigma_I^2$.  
The goal at each step of the simulated annealing process is to determine when a  set of proposed weights $\bfw^{(tmp)}$ provides \emph{lower gain} (and thus less interference power) than the current set of weights $\bfw^{(t)}$. 
To that end, let the cost metric for a set of weights at time $t$ be denoted as
\rev{
\begin{equation}
    \label{eq:closed_loop_cost}
    Z_t  = \frac{1}{N}\sum_{k=1}^N|x[k]|^2
\end{equation}
}
Let the gain in the interference direction $\psi$ at time index $t$ be $G_\psi(\bfw^{(t)})=G$, and the gain at time index $t=t+1$ be $G-\Delta G$, where $\Delta G$ represents the reduction in the gain due to changing the weights to the proposed weights. 

We can define the probability of error as the probability that the new set of weights is rejected despite the fact that the new gain in the interferer's direction is actually lower ({\it i.e.,} $\Delta G > 0$):
\begin{equation}
    \label{eq:prob_error}
    Pr(\textit{error}) = Pr(Z_1 > Z_0 | \Delta G > 0)
\end{equation}
Now, let the interference-to-noise ratio (INR) be given as ${\text {INR}} = \sigma_I^2/\sigma_n^2$.

\begin{theorem}
\label{thm:error_prob}
Given measurements $Z_0$ and $Z_1$ with $\Delta G > 0$, the probability of rejecting the new weight vector is
\begin{equation}
    \label{eq:p_error}
    Pr(Z_1 > Z_0 | \Delta G > 0) = Q\left(\Gamma \right)
\end{equation}
\begin{equation}
    \Gamma = \sqrt{\frac{N\Delta G^2\text{INR}}{\left(4G - 4G\Delta G + 2\Delta G^2\right)\text{INR} + 8G - 4\Delta G + \frac{4}{\text{INR}}}}
\end{equation}
where $Q(x)$ is the standard $Q$-function (i.e., the integration of the tail of a standard normal from $x$ to infinity).   
\end{theorem}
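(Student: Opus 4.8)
The plan is to treat $Z_0$ and $Z_1$ as sample averages of $N$ i.i.d.\ squared-magnitude terms, compute their first two moments exactly, invoke the central limit theorem to approximate each as Gaussian, and then evaluate $Pr(Z_1 > Z_0)$ as a Gaussian tail probability. First I would fix the per-sample statistics. From (\ref{eq:closed_loop_sig}), with $z[k]$ and $n[k]$ independent zero-mean Gaussians of variances $\sigma_I^2$ and $\sigma_n^2$, each $x[k]$ is a zero-mean Gaussian whose variance is $\mu_0 = G\sigma_I^2 + \sigma_n^2$ under the current weights and $\mu_1 = (G-\Delta G)\sigma_I^2 + \sigma_n^2$ under the proposed weights, since the proposed gain is $G-\Delta G$.

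The key moment computation is the variance of $|x[k]|^2$. For a zero-mean Gaussian of variance $\mu$ one has $E[|x[k]|^2] = \mu$ and, from the fourth moment, $\mathrm{Var}(|x[k]|^2) = 2\mu^2$. Averaging $N$ independent such terms in (\ref{eq:closed_loop_cost}) then gives $E[Z_t] = \mu_t$ and $\mathrm{Var}(Z_t) = 2\mu_t^2/N$ for $t\in\{0,1\}$, and for moderately large $N$ the CLT makes each $Z_t$ approximately Gaussian.

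Next I would form the decision statistic $D = Z_1 - Z_0$. Because $Z_0$ and $Z_1$ are measured over disjoint time windows with independent interference and noise realizations, $D$ is approximately Gaussian with mean $E[D] = \mu_1 - \mu_0 = -\Delta G\,\sigma_I^2$ and variance $\mathrm{Var}(D) = 2(\mu_0^2 + \mu_1^2)/N$. Since $\Delta G > 0$ we have $E[D] < 0$, and the error event $\{Z_1 > Z_0\}$ is the upper tail $\{D>0\}$, so $Pr(\textit{error}) = Q\!\left(-E[D]/\sqrt{\mathrm{Var}(D)}\right) = Q\!\left(\Delta G\,\sigma_I^2 \big/ \sqrt{2(\mu_0^2+\mu_1^2)/N}\right)$. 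The final step is purely algebraic: substitute $\sigma_I^2 = \text{INR}\cdot\sigma_n^2$, expand $\mu_0^2 + \mu_1^2$, and divide numerator and denominator by $\sigma_n^4\,\text{INR}$ to recover the stated expression for $\Gamma$.

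I expect the main obstacle to be bookkeeping rather than deep technique: getting the fourth-moment variance right and, in particular, tracking whether the samples are real or complex, since that changes $\mathrm{Var}(|x[k]|^2)$ by a factor of two and hence the scale of the whole denominator. Carrying a real-signal model (variance $2\mu^2$ per sample) reproduces the stated denominator coefficients $8G - 4\Delta G$ and $4/\text{INR}$ together with a leading $\text{INR}$ term $4G^2 - 4G\Delta G + 2\Delta G^2$; I would double-check the printed term $4G$ against the authors' moment convention, since the expansion of $\mu_0^2+\mu_1^2$ yields $4G^2$ in that slot. The remaining soft points are the Gaussian approximation for each $Z_t$ and the independence of $Z_0$ and $Z_1$, both of which I would justify via the CLT and the disjoint-window measurement model.
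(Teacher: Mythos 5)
Your proposal matches the paper's own proof essentially step for step: same per-sample moments, same Gaussian (CLT) approximation of the chi-square averages, same difference statistic $Z_1-Z_0$ with mean $-\Delta G\,\sigma_I^2$ and variance $\tfrac{2}{N}(\mu_0^2+\mu_1^2)$, and the same normalization by $\text{INR}$. Your suspicion about the leading denominator term is also confirmed by the paper's own derivation, whose intermediate line correctly carries $\left(4G^2-4G\Delta G+2\Delta G^2\right)\sigma_I^4$ before the final printed expression drops the square on the first $G$ --- so the $4G$ in the theorem statement is indeed a typo for $4G^2$ (harmless in the plotted case $G=1$).
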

\begin{proof}
    See Appendix.  
\end{proof}


Figure \ref{fig:error_prob_delta} plots the probability of error for three different {\text{INR}} values, and fixed $\Delta G=0.01$ and $G_\theta(w[k]) = G = 1$ as $N$ increases. 
For low values of $N$, $P_e$ remains near 0.5, but as $N$ increases past $10^4$ to $10^5$, the probability of error improves significantly. 
This indicates that longer integration times can substitute for higher INR. Simulation has shown that this expression is very accurate.

Further, Figure \ref{fig:gain_convergence} demonstrates that the closed-loop simulated annealing algorithm indeed benefits from the improved probability of error provided by longer integration times. 
However, even with large integration times,  the $P_e$ remains high for for many iterations, which results in longer search times.   This is because, as seen from Eq. (\ref{eq:p_error}),  $G$ and $\Delta G$ have a large effect on the probability of error.  While the shape of the curves in Figure \ref{fig:gain_convergence} is similar for $G<1$, the $G\geq 1$ regime is much more error-prone due to the higher gain.  Once $G$ has been reduced to a sufficiently small value (typically below 1), the probability of error decreases rapidly and the weights  converge to a gain similar to the open-loop technique.

\begin{figure}
    \centering
    \includegraphics[scale=0.5]{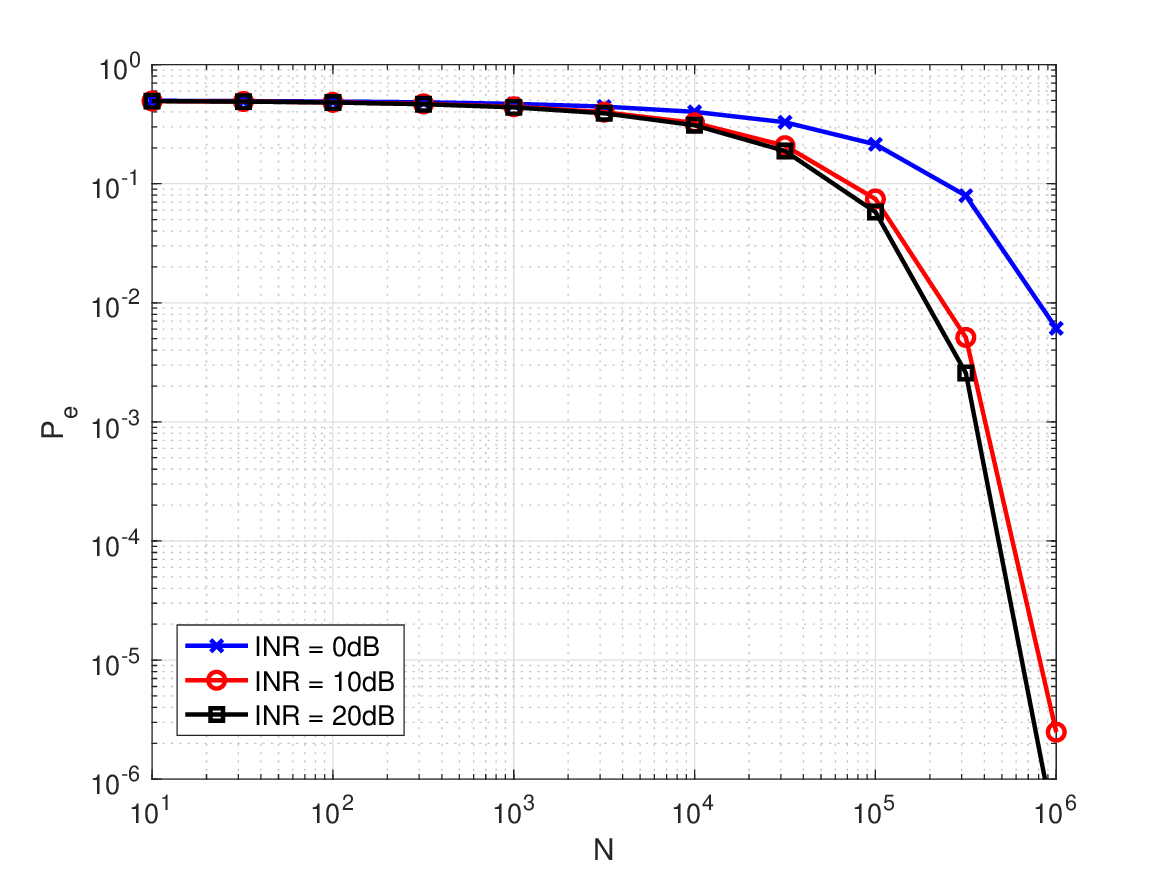}
    \caption{\rev{The probability of error decreases as the sample length $N$ increases. The single-sample INR ranges from 0dB to 20dB, $\Delta G $ = 0.01 and $G_\theta(w[k])=0$dBi}}
    \label{fig:error_prob_delta}
\end{figure}

\begin{figure}
    \centering
    \vspace{1cm}
    \includegraphics[trim={5cm 8cm 5cm 10cm}, width=0.9\columnwidth]{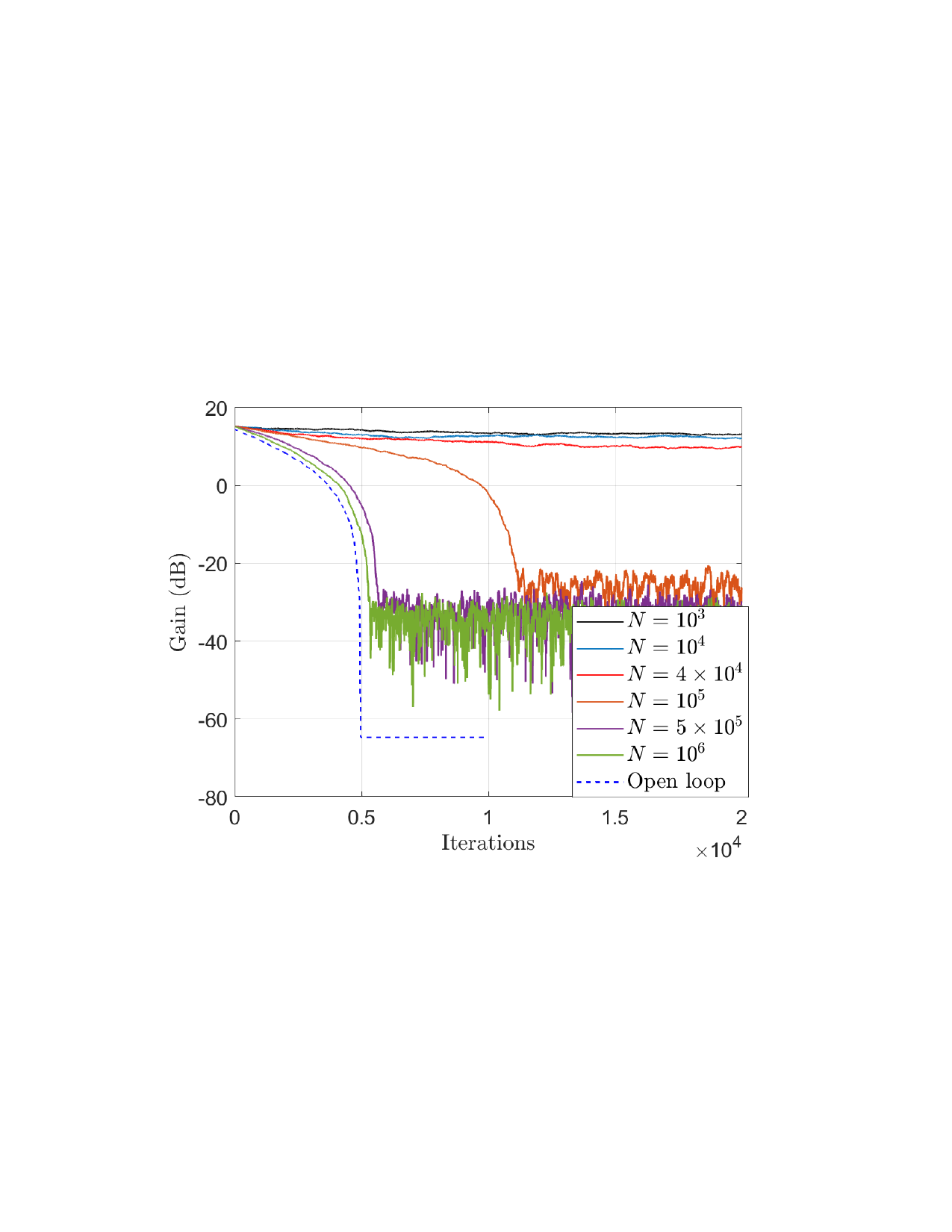}
    \caption{\rev{Increasing the number of samples per decision $N$ reduces both the number of iterations and the final gain. Sample length ranges from $10^3$ to $10^6$. INR = 10$dB$, $\theta=1.75^\circ$. }}
    \label{fig:gain_convergence}
\end{figure}

\begin{table}[h]
    \centering
    \begin{tabular}{l|l}
    \hline 
    \multicolumn{2}{c}{Closed-Loop Simulated Annealing} \\
    \hline
     1. & Initialization: $k=0$, $\bfT=\left [ 1, \frac{1}{2},\frac{1}{3},\ldots, \frac{1}{T} \right ]$ \\
      & \hspace{1.5cm} $\bfw^{(0)}$ randomly chosen from ${\cal W}^N$ \\
      & \hspace{1.5cm} $E^{best} = Z_0=\frac{1}{N}\sum_{n=1}^N|x[n]|^2$ \\
         & Repeat \\
      2. & $\hspace{1cm} \bfw^{tmp} = \bfw^{(k)}$\\
      3. & $\hspace{1cm} n=\lceil rand*N \rceil$,$m=\lceil rand*M \rceil$ \\
      4. & $\hspace{1cm} \bfw^{tmp}_n={\cal W}(m)$ \\
      5. & $\hspace{1cm} E^{tmp} = Z_{k+1}=\frac{1}{N}\sum_{n=1}^N|x[(k+1)N+n]|^2$\\
      6. & $\hspace{1cm} \Delta E =E^{best} - E^{tmp} $ \\
      7. & $\hspace{1cm} \text{if } \Delta E \geq 0$ OR $\text{rand}(0,1) < \min \left(1, e^{\Delta E /\bfT(k)}\right )$ \\
      8. & $\hspace{1.5cm} \bfw^{(k+1)}=\bfw^{tmp}$ \\
      9. & $\hspace{1.5cm} E^{best} = E^{tmp}$ \\
      10. & $\hspace{1cm} \text{else } \bfw^{(k+1)} = \bfw^{(k)}$ \\
      11. & $\hspace{1.5cm} k=k+1$ \\
          & Until $k=T$  \\
      \hline
    \end{tabular}
    \label{tab:closed_loop}
\end{table}

\subsubsection{Single-Angle Results}
Given a sufficient number of iterations, the proposed approach (closed-loop simulated annealing) is able to obtain performance approaching that of the alternative ``open-loop'' annealing, provided the $\text{INR}$ is sufficiently high.  
As shown in Fig. \ref{fig:theta_INR}, an $\text{INR}$ of 20 dB is sufficient to obtain null depts of roughly $-50$dB with 16-ary weights (except at the first sidelobe). A lower or higher INR (10dB or 30dB) does not make a substantial difference.
The interferer is assumed to be at a constant angle with respect to the receiver over many ($10^5$) iterations.  The gain plotted is the gain after those $10^5$ iterations.  Note that the total number of samples needed for $N_I$ iterations depends to some degree on the INR per sample.  The simulations assume that the INR per decision is that listed in the legend.
Since the first sidelobe has high gain, we believe that the number of solutions is smaller and thus a larger number of error-free iterations is needed to find the optimal solution.  Thus, the resulting gain after $10^5$ iterations is higher. 
The fact that the open-loop annealing process can find a very low gain vector supports this belief; the solution does exist, it is just more difficult to find using closed-loop annealing. 

\begin{figure}
    \centering
    \includegraphics[scale=0.65]{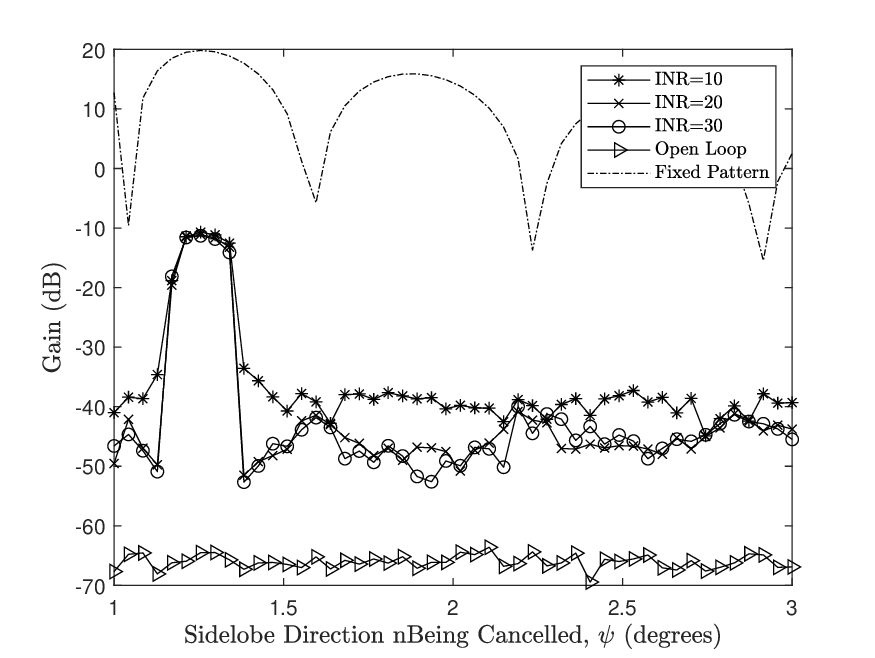}
    \caption{\rev{Gain obtained in the interference direction for Varying Values of \texttt{INR} with 16-ary Weights. We can see that 30dB is sufficient to essentially match open-loop performance. }}
    \label{fig:theta_INR}
\end{figure}

\subsubsection{Clustering to Improve Convergence Time}
The simulated annealing approach (a) initializes each weight with a random value, (b) proceeds by assigning one weight in the vector with a random value and the (c) determines whether or not to accept the new vector of weights based on the cost function and the cooling temperature.  For large sets of weights, convergence can be slow as shown in Figure \ref{fig:gain_convergence}.   
To improve convergence time, we explore \emph{clustering} rim surfaces into fixed groups which use the same value in order to reduce the search space and corresponding convergence time. 
In this approach, each annealing iteration  selects and modifies an entire cluster simultaneously to the same value. 
Figure \ref{fig:clustering} demonstrates the effectiveness of this technique, for a specific angle ($\theta = 2.0^\circ$).  
The plot presents the gain (i.e., null-depth) achieved after $10^4$ iterations for different weight quantization values ($M$) using various cluster sizes and an INR of 30dB.      
We can see that for cluster sizes up to approximately 50, there is no loss in the gain achievable, except in the binary weight case.  For binary weights, a cluster size beyond approximately 35 shows a loss in achievable null depth.

The major effect of this clustering is to reduce the problem space by several orders of magnitude.  The reduction in the problem space improves convergence time.  In fact, the results in Figure  \ref{fig:clustering} used only $10^4$ iterations as opposed  to $10^5$ in Figure \ref{fig:theta_INR}.   
The reason that clustering improves convergence without impacting null depth appreciably is that while reducing the problem space  reduces the degrees of freedom,  due to the geometry of the problem, nulling specific directions does not seem to require as many degrees of freedom as are available.  This conjecture is supported by the results of Figure \ref{fig:clustering} and implies correlation  between the needed weights by adjacent surfaces on the reconfigurable rim due to the geometry of the antenna structure. 

\begin{figure}
    \centering
    \includegraphics[scale=0.6]{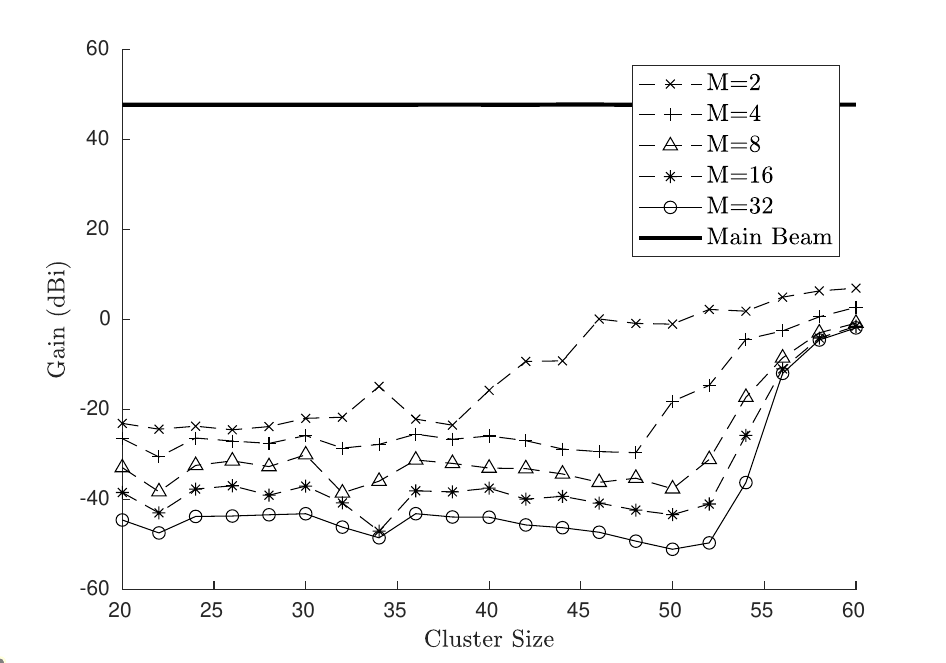}
    \caption{Null-depth Achieved after $10^4$ Iterations for Different Weight Quantization Values ($M$) using Different Cluster Sizes.  ($\theta = 2.0^\circ$, INR = 30dB)}
    \label{fig:clustering}
\end{figure}

The over-dimensionality of the single-angle problem is also supported by Figure \ref{fig:average_weights}.  This figure was created by running non-clustered open-loop simulated annealing with a source at $2^\circ$, $10^3$ times with binary weights.  The resulting set of $10^3$ weight vectors was then averaged.  Since the simulated annealing process is stochastic and starts at a random point, there is no guarantee that the final weight vectors of multiple runs will be identical or even similar.  
In fact, in this set, no solution occurs twice and the average normalized Hamming distance between solutions is 0.09.   However, Figure \ref{fig:average_weights} clearly shows that the individual elements of the solutions do exhibit structure.  If the individual elements of weight solution vectors were equally likely to be $\{-1,1\}$, we would expect Figure \ref{fig:average_weights} to show values nearly equal to zero.  However, we can see that the values are nearly +1 or -1, meaning that while there may be many optimal solutions, the set of ``ideal'' weights have individual elements that are heavily biased towards certain values.   
Additionally, this figure shows the final values for each weight tend towards clusters, even when clustering is not forced. 

Interestingly, while it seems that the weights tend to converge towards this structure, no optimal set of weights is actually equal to the set of most likely values.  In fact, the pattern gain provided by using the set of most common weight values of each surface is $\sim 15$ dBi, which is no better than using randomly chosen weights. 
However, when the resulting collection of final weights are treated as a distribution (i.e., individual weights are chosen from a binary distribution indicated by Figure \ref{fig:average_weights}), the expected pattern gain drops to approximately $-20$ dBi. More specifically, Figure \ref{fig:gain_dist} presents gain histograms for weights which are randomly generated (i.e., the intial weights), weights acquired via closed-loop simulated annealing, and finally weights which determined by sampling from a binary distribution implied by the values shown in Fig. \ref{fig:average_weights}. 
Interestingly, using this latter approach yields gains which are significantly lower than randomly selected values (as expected), but not as low as those generated through simulated annealing.


These clustering techniques provide a way to reduce the problem complexity while maintaining good performance. 
However, even using this approach, the convergence times are still relatively long and require a non-moving source. 
In realistic implementations, sources will likely be moving with respect to the receiver. 
We consider this situation in the following section.

\begin{figure}
    \centering
    \includegraphics[scale=0.55]{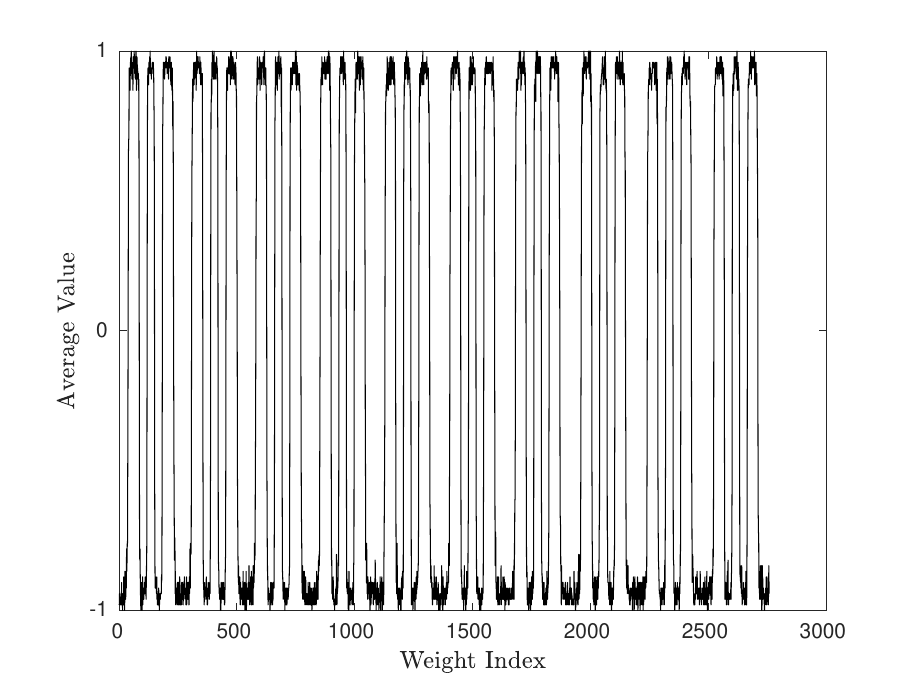}
    \caption{Average Final Weights for $10^3$ Independent Runs of Closed-Loop Simulated Annealing with Binary Weights when Placing a Null at 2$^\circ$. }
    \label{fig:average_weights}
\end{figure}
\begin{figure}
    \centering
    \includegraphics[scale=0.55]{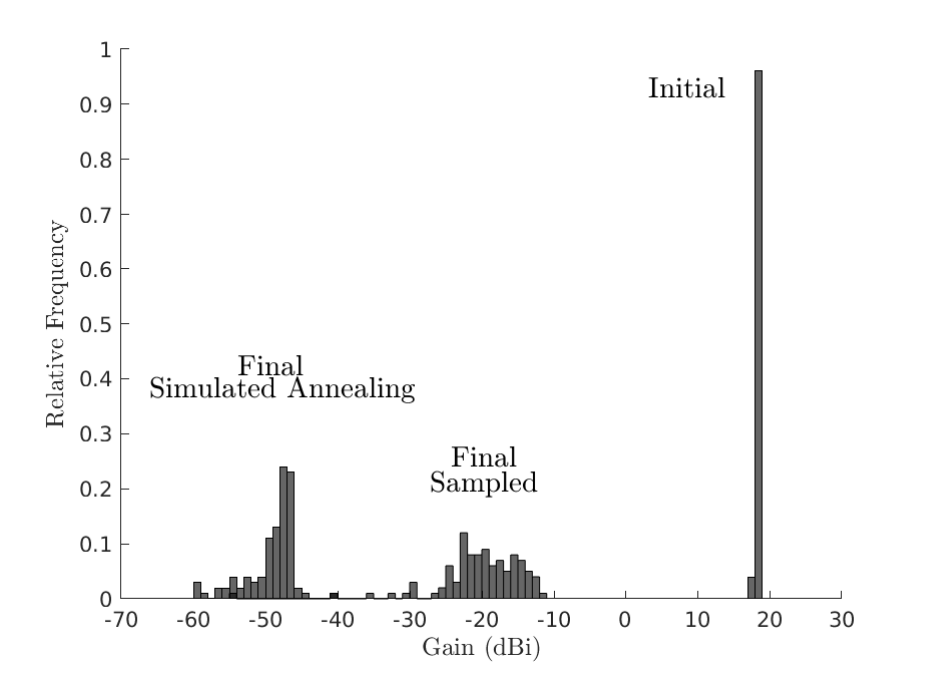}
    \caption{Histograms for the gain distribution of closed-loop simulated annealing, random weights, and samples from the most common final weights. }
    \label{fig:gain_dist}
\end{figure}

\subsubsection{Performance with Moving Source}
Closed-loop annealing requires many iterations for convergence and thus requires many samples of the interfering signal at a stationary (or near stationary) angle of arrival.   However, the interferers of interest are assumed to be in Low Earth Orbit (LEO), and will thus move across the field of view of the radio telescope. 
As has been shown in this work, the antenna pattern for a fixed set of rim weights can have a null at arbitrary locations (outside the main beam) and even at multiple angles, but not necessarily at all angles simultaneously.  Thus, the antenna system will need to update the vector of weights at regular intervals (depending on the source angular velocity) to maintain the pattern null at the correct location as the source moves across the sky. 

To examine the performance of closed-loop simulated annealing in the moving source case, we simulated the algorithm as an interferer moves across the view of the receiver.  If the algorithm can fast enough to keep up with the movement of the source,  the energy received from the interferer will remain low.
In practice, we find that this method requires a very high number of decision intervals. 
In order to allow the required number of decisions, the interference likely needs to be oversampled with weights changing every few samples.

Fig. \ref{fig:moving_int} shows the performance of closed-loop simulated annealing with a time-varying interference angle. 
The weights are initialized with a low-gain weight vector for the initial angle of $3^\circ$.  
We assume that the sampling rate is 1.1GHz and that the source moves with an angular velocity of 0.79 degrees/sec and the INR is 60dB.  
Thus, a decision is made every $N=10^3$ samples and there are $2.2\times10^6$ decisions per second. 
The large number of decisions per second is required to maintain low gain, since if the interferer moves too far between decisions the null in the pattern will be lost. 
This also necessitates the high sampling rate, to allow enough samples per decision. 
We can see that when 8-ary weights are used,  this technique is able to maintain a null depth below -35dBi (except when interferer is in the first sidelobe) while the depth is pushed down below -50dBi for 16-ary weights. 


%
\begin{figure}
    \centering
    \includegraphics[scale=0.55]{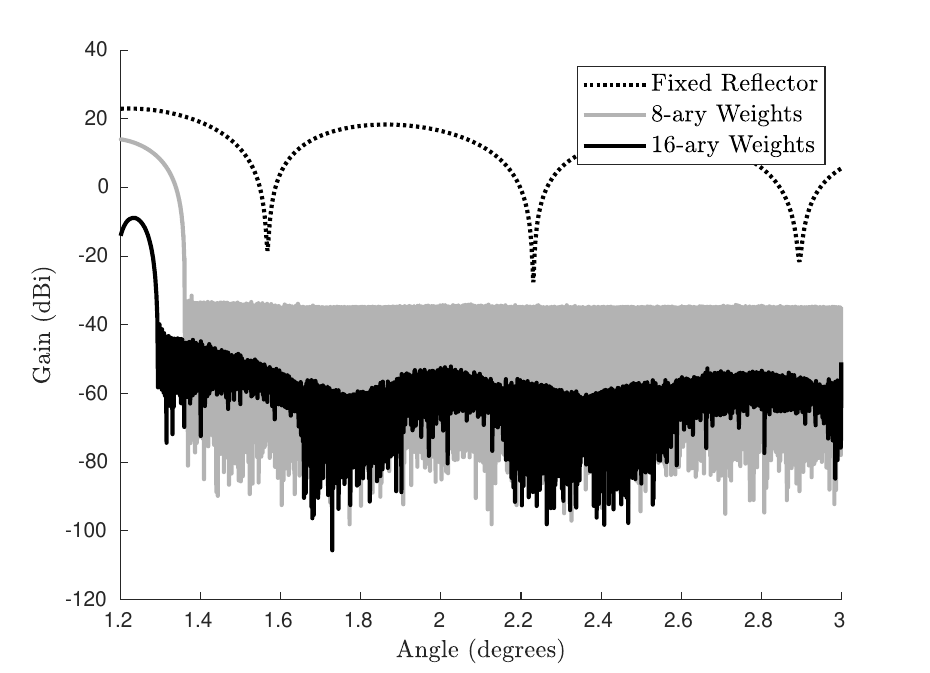}
    \caption{Gain in the direction of an interferer as the interferer moves across the sky (from $\theta=3^\circ$ to $\theta=1.2^\circ$, using closed-loop simulated annealing. The INR is 60$dB$ and $N=10^3$. }
    \label{fig:moving_int}
\end{figure}

\subsection{Library-based Approach}
The previous approach was successful, but relied heavily on oversampling and a relatively high INR.  In fact, the sampling rate was 110 times the bandwidth of the interfering signal.  The high oversampling rate and high INR were needed for the algorithm to converge sufficiently fast (i.e., fast enough to keep up with the moving interferer.)     

A second approach that allows faster convergence relies on the ability to create a library of weight vectors, each calibrated to provide a null in a desired direction. 
In practice, this calibration could be accomplished using a stationary source.  This method also uses a simulated annealing decision process. 
However, instead of pseudo-randomly creating the new weight each decision time, this method samples a new vector from the pre-calibrated library. 
Since the library can be ordered by null location, this sampling can be structured, with a higher chance of sampling vectors with nulls close to the current vector.

Fig. \ref{fig:library} shows the performance of this technique for the same moving source described above. 
Specifically: the interferer bandwidth is 10MHz, the INR per decision is 30 dB, and the interferer moves at 0.79 deg/sec.   
This yields $\sim$2$\times10^4$ decision intervals per second, as compared to $\sim$2.2$\times 10^6$ decision intervals per second in the closed-loop simulated annealing approach described earlier. 
Since adjacent elements in the library have a larger difference in gain than changing a single weight would provide, a smaller number of decisions per second can provide similar results. 
Thus, the sampling rate can be reduced by two orders of magnitude.
On the other hand, the performance of this method is inferior compared to the energy minimization technique described above due to large peaks in gain. 
These are unavoidable as exploration of the library forces the system to accept the gain of the explored weights which may be high. 
This is also a consequence of the larger change in gain between library elements as opposed to the previous approach where only a single element in the weight vector was changed. 
However, this performance is obtained under reasonable INR conditions and with a substantially reduced number of decision intervals. 

\begin{figure}
    \centering
    \includegraphics[scale=0.55]{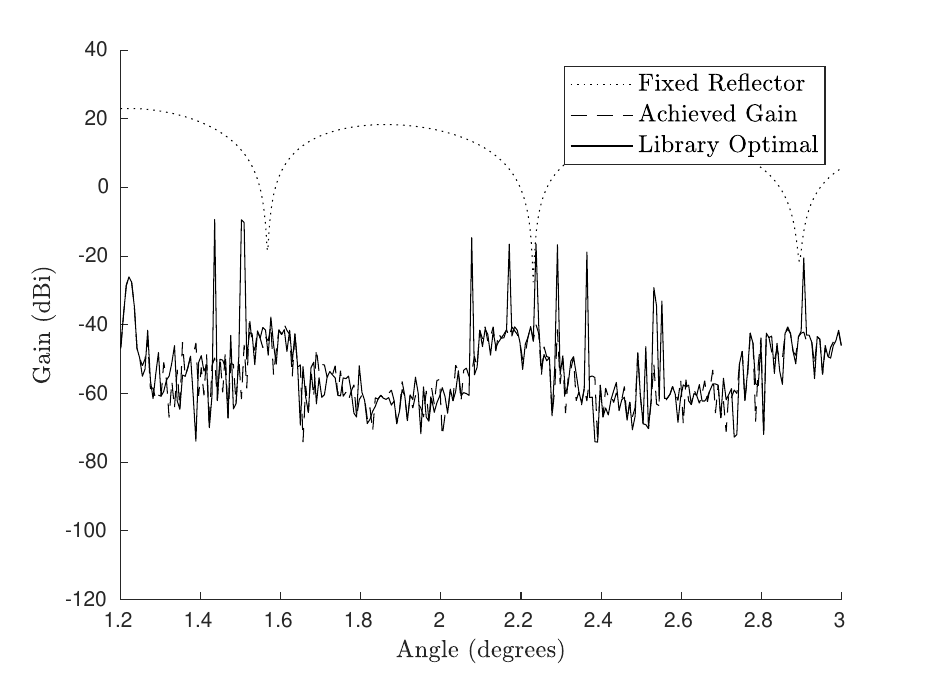}
    \caption{Gain in the direction of an interferer moving across the sky, using the library-based approach. The INR is 30$dB$ and $N=10^3$. }
    \label{fig:library}
\end{figure}

\section{Hybrid Approach to Weight Selection} \label{sec:hybrid}
In the previous sections we have demonstrated that open-loop approaches can be used to determine optimal weights for the rim scattering, provided that the electric field intensity (i.e., the antenna pattern) is known.  Alternatively, closed-loop approaches can be used to determine the weights provided that the output interference power can be measured as the weights are adapted.  This approach obviates the need for knowledge of the pattern, but requires the signal to be sampled at a rate much faster than the change in optimal weights due to movement of the interfering source.  

These two approaches make an ``all or nothing'' assumption on the knowledge of the pattern.  However, while exact knowledge of the pattern is overly optimistic, assuming no knowledge of the pattern is perhaps overly pessimistic.  A hybrid approach would be to use the open-loop approach to provide a starting point for the closed-loop technique.  The goal would be to use knowledge of the approximate field to provide a good starting point for the closed-loop annealing process.

To test this approach we used the following model.  The open-loop optimization used a presumed pattern to determine optimal weights.  Specifically, the presumed pattern is calculated assuming that the true surface current on the dish is related (but not equal) to the assumed surface current distribution.  The assumed pattern was calculated using a surface current resulting from a feed with taper $q=1.5$, while the true pattern results from a feed with taper $q=1.14$.  Both patterns are calculated using a PO approach, but $q=1.14$ corresponds to the true pattern while $q=1.5$ corresponds to the assumed pattern (i.e., what our best understanding of the antenna predicts).  Note that the patterns with two different feed tapers are simply being used as proxies for the true and assumed patterns.  We are not arguing that knowledge of the feed taper is what makes the pattern difficult to know in practice. 

In Figure \ref{fig:ideal} we plot four patterns.  Specifically, we first plot the ``assumed'' and  ``true'' quiescent patterns (i.e., the fixed patterns without optimization)   assuming the two feed tapers listed above.  We can see that the assumed pattern has slightly lower gain than the true pattern. Additionally, we plot the pattern of the reconfigurable dish where the reconfigurable dish weights were optimized to create a null in the assumed pattern at $\psi=1.75^\circ$ using simulated annealing with $M=4$.  In other words, the weights were calculated assuming $q=1.5$. This plot is labeled ``Reconfigurable 18m dish - Assumed Pattern'' and represents what the optimization algorithm believes the pattern to be after optimization.  The final plot included is labeled ``Reconfigurable 18m dish - True Pattern'' and represents the actual pattern achieved when applying the calculated weights from the optimization process. The mismatch between the assumed and true patterns results in the null depth being substantially reduced.  While the algorithm believes that a null below -50dBi in gain was created, in actuality the gain is approximately -9.3dBi.     Thus, although the ``optimal'' weights applied to the assumed pattern results in a deep null at the desired angle, applying those same weights to the true pattern does not provide a substantial null.  However, while the null depth is obviously not what is desired, the overall true pattern is not dramatically different from the assumed pattern, which motivates a hybrid approach where inexact knowledge can be used to get the weights ``close'' to optimal and then closed-loop adaptation tweaks the weights to generate the desired null.  

\begin{figure}
    \centering
    \vspace{0.25in}
    \includegraphics[trim={5cm 8cm 5cm 10cm}, width=0.75\columnwidth]{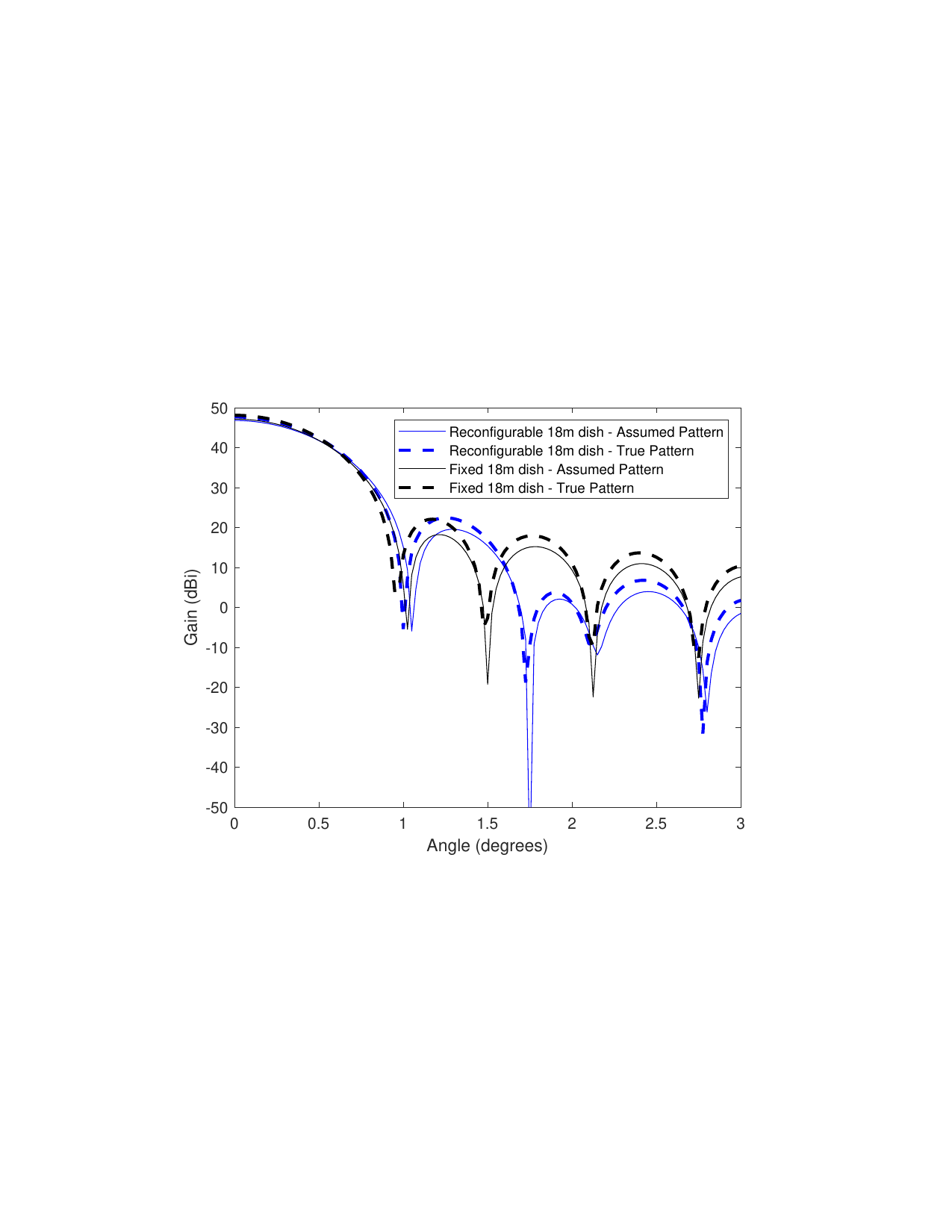}
    \caption{Impact of inexact knowledge on optimized pattern.  ``Assumed pattern'' refers to the pattern assumed by the weight optimization algorithm with (``Reconfigurable'') and without (``Fixed'') weight optimization. ``True pattern'' refers to the pattern that is actually achieved when using either using (``Reconfigurable'') or not using (``Fixed'') the weights obtained by the optimization algorithm. 
 (In both cases the reconfigurable weights optimized to create a null at  $\psi = 1.75^\circ$ using the assumed pattern).  }
    \label{fig:ideal}
\end{figure}

Thus, in the proposed hybrid approach,  weights are first obtained by using simulated annealing to create a null at the desired angle in the assumed pattern.  These weights are then used as a starting point for the closed-loop algorithm.  To demonstrate the effectiveness of this approach, Figure \label{fig:conv} plots the convergence of the weights for the two portions of the algorithm.  Specifically, the figure includes two plots.  The first plot is the convergence of the open-loop portion that finds the weights to create a null using the assumed pattern.  Note that the gain of the assumed pattern at $\psi = 1.75^\circ$ is plotted.  The convergence requires slightly over 6000 iterations due to the fact that it starts from a random initial set of weights.    As in the plot of Figure \ref{fig:ideal}, the null depth that the open-loop algorithm believes it achieves is well under -50dBi.  However, when applying those weights to the true pattern the actual null depth achieved is roughly -10dBi. However, these weights are not used, but instead serve as the starting point for the second portion of the hybrid algorithm. 

The second part of the algorithm starts by applying the open-loop solution, but estimates the interference power and modifies the weights using a closed-loop approach.  Note that in this example, we have assumed perfect gain measurements in order to isolate the improvement in convergence due to the starting point.  In the true closed-loop implementation, decision errors will lengthen the convergence.  However, the number of error-free steps required reduces from over 6000 when staring from a random set of weights to just under 600 when starting from the open-loop solution (i.e., an order of magnitude improvement).  This convergence time is also dramatically faster than the closed-loop approach described in the previous section.  The improvement is directly due to the fact that the open-loop optimization provides a much better initial starting point than we would have without it.  Thus, it appears that a hybrid approach may be a viable alternative to either open or closed-loop approaches. 

\begin{figure}
    \centering
    \vspace{0.25in}
    \includegraphics[trim={5cm 8cm 5cm 10cm}, width=0.75\columnwidth]{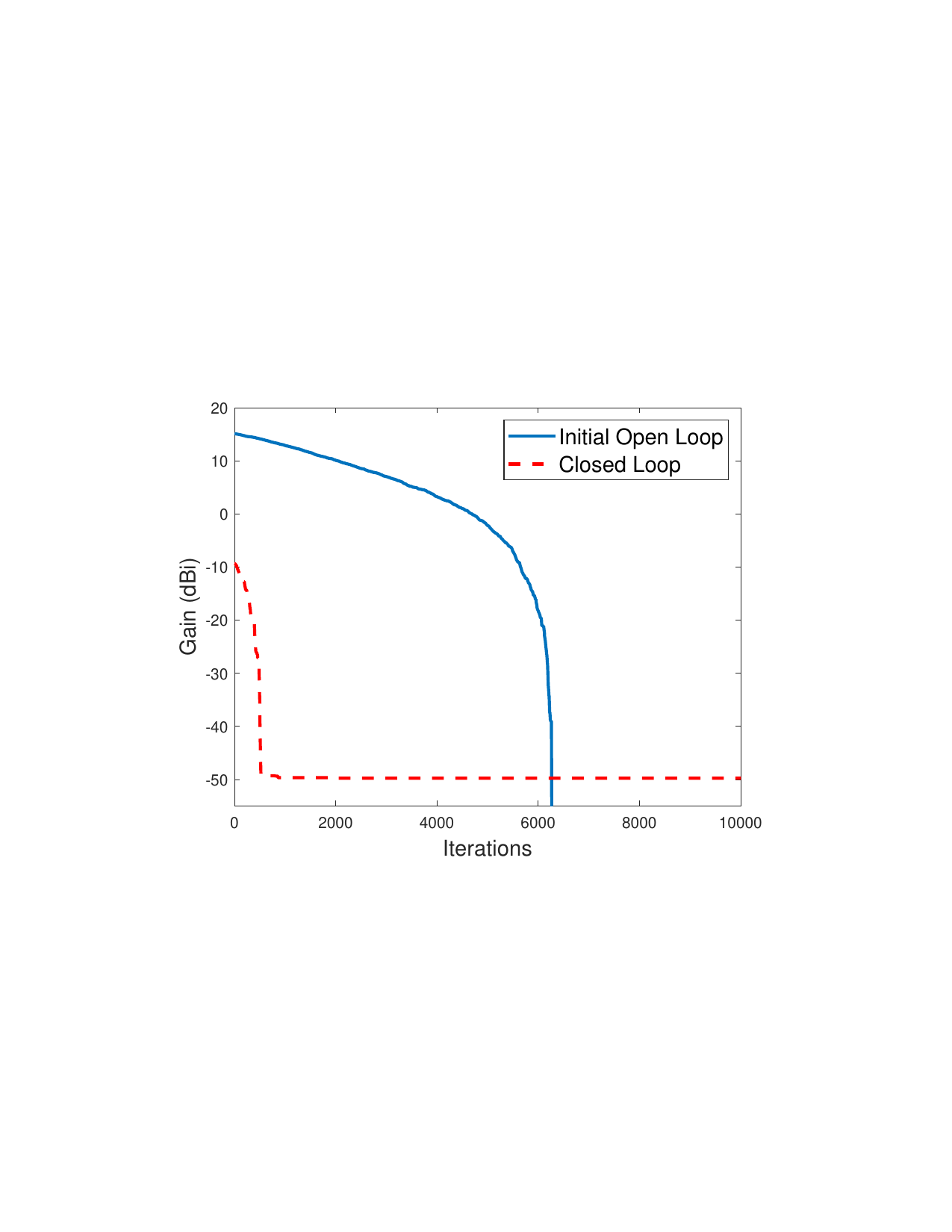}
    \caption{Convergence of the weights in the open-loop and closed-loop portions of the hybrid algorithm (open-loop weights optimized to create a null at  $\psi = 1.75^\circ$ using ideal pattern).  }
    \label{fig:conv}
\end{figure}

\section{Conclusions}
\label{sec:concl}

In this paper we have described multiple techniques for determining optimal or near-optimal weights for creating nulls in the pattern of a prime focus-fed circular axisymmetric paraboloidal reﬂector antenna equipped with reconfigurable elements on the rim.  The general approach applies to a larger class of reflector antennas (e.g., different feed systems), but the discussion was limited to this specific type for demonstration purposes.  It was shown that if the elements placed on the rim of the reflector are capable of imparting both amplitude and phase adjustment, creating perfect nulls at arbitrary directions of arrival is possible.  Multiple nulls can be created while the gain of the main lobe is simultaneously held fixed.  

More importantly it was shown that more practical unit-modulus weights can be found using a least-squares approach based on the gradient projection algorithm.  These weights, while not requiring gain control (i.e., they require phase shifts only), and capable of placing a null at any angle, unfortunately require continuously-variable phase.  However, it was also shown that deep nulls can be created using quantized unit modulus weights which require only binary or quaternary phase values using simulated annealing.  Further, it was shown that the approach can also be used to simultaneously create a null at a specific angle while maintaining a constant main lobe gain as well as creating nulls at multiple angles.  The frequency sensitivity of the weights was also examined along with techniques to address that sensitivity.    

The first group of weight selection approaches discussed in Section \ref{sec:weights} require knowledge of the antenna pattern and  calculated weights in an ``open-loop'' fashion.  To address  pattern uncertainty, in Section \ref{sec:closedloop} we  investigated ``closed-loop'' optimization, where no knowledge of the antenna pattern is required. 
Through this method, a receiver iteratively attempts to reduce the total power received from interference. 
Since this takes many iterations while the interferer  is constantly in motion, we  validated the tracking ability of the closed-loop approach through dynamic optimization, changing the angle of arrival of the interferer each iteration and showing that closed-loop annealing is able to maintain a consistent null.   

In order to improve the convergence speed of the closed-loop approach we also explored a library-based approach that chooses  weights from a fixed library of weights.  This approach was shown to also be effective and converged more quickly.  The faster convergence requires a lower sampling rate relative to the source movement.   

As a final investigation we explored a hybrid approach that uses an open-loop optimization to obtain an initial starting point for the weights in the closed-loop optimization.  It was found that this approach allows for a reduction in the convergence time that is multiple orders of magnitude and is a promising means for actual implementation.  

\rev{It should be noted that the current work is based on physical optics modeling of a reflector antenna.  In two independent evaluations,  we have shown that full-wave-based results are consistent with the above-mentioned physical optics-based results \cite{Hum23,Budhu24,Budhu24b}.  Specifically, it was shown that the simple serial search approach provides nearly the same performance in either case.  Based on those results, we have no reason to believe that the optimization techniques described in this work might give significantly better or worse outcomes when applied to a system modeled using full-wave simulation as opposed to PO. It is our opinion that the key is showing that the optimization algorithms can find a good solution, since the performance of the algorithm will not be fundamentally different regardless of the finer details of the electromagnetics.  Of course, prior to any implementation of the proposed optimization algorithms, adopters of this technology are encouraged to consider full-wave analysis as a final step for additional confirmation.}

\section*{Acknowledgements}
This material is based upon work supported in part by the National Science Foundation under Grant AST 2128506.  The authors would also like to thank Xinrui Li for his help with proof-reading and creation of the figures. 
\bibliographystyle{IEEEtran}
\bibliography{refs}

\begin{thebibliography}{10}
\providecommand{\url}[1]{#1}
\csname url@samestyle\endcsname
\providecommand{\newblock}{\relax}
\providecommand{\bibinfo}[2]{#2}
\providecommand{\BIBentrySTDinterwordspacing}{\spaceskip=0pt\relax}
\providecommand{\BIBentryALTinterwordstretchfactor}{4}
\providecommand{\BIBentryALTinterwordspacing}{\spaceskip=\fontdimen2\font plus
\BIBentryALTinterwordstretchfactor\fontdimen3\font minus \fontdimen4\font\relax}
\providecommand{\BIBforeignlanguage}[2]{{%
\expandafter\ifx\csname l@#1\endcsname\relax
\typeout{** WARNING: IEEEtran.bst: No hyphenation pattern has been}%
\typeout{** loaded for the language `#1'. Using the pattern for}%
\typeout{** the default language instead.}%
\else
\language=\csname l@#1\endcsname
\fi
#2}}
\providecommand{\BIBdecl}{\relax}
\BIBdecl

\bibitem{rohlfs2013tools}
K.~Rohlfs and T.~L. Wilson, \emph{Tools of radio astronomy}.\hskip 1em plus 0.5em minus 0.4em\relax Springer Science \& Business Media, 2013.

\bibitem{CondonRansom+2016}
\BIBentryALTinterwordspacing
J.~J. Condon and S.~M. Ransom, \emph{Essential Radio Astronomy}.\hskip 1em plus 0.5em minus 0.4em\relax Princeton University Press, 2016. [Online]. Available: \url{https://doi.org/10.1515/9781400881161}
\BIBentrySTDinterwordspacing

\bibitem{ellingson2015antennas}
S.~Ellingson, ``Antennas in radio telescope systems,'' \emph{Handbook of Antenna Technologies" ed. by Zhi Ning Chen; Singapore: Springer Singapore}, pp. 1--21, 2015.

\bibitem{series2013techniques}
{International Telecommunications Union, Radiocommunicatin Sector}, ``Techniques for mitigation of radio frequency interference in radio astronomy,'' \emph{Rep. ITU-R RA}, pp. 2126--1, 2013.

\bibitem{Briggs05}
F.~H. Briggs and J.~Kocz, ``Overview of technical approaches to radio frequency interference mitigation,'' \emph{Radio Science}, vol.~40, no.~05, pp. 1--11, 2005.

\bibitem{bird2015fundamentals}
T.~S. Bird, \emph{Fundamentals of aperture antennas and arrays: from theory to design, fabrication and testing}.\hskip 1em plus 0.5em minus 0.4em\relax John Wiley \& Sons, 2015.

\bibitem{Ellingson21}
S.~Ellingson and R.~Sengupta, ``Sidelobe modification for reflector antennas by electronically reconfigurable rim scattering,'' \emph{IEEE Antennas and Wireless Propagation Letters}, vol.~20, no.~6, pp. 1083--1087, 2021.

\bibitem{Bucci81}
O.~Bucci, G.~Di~Massa, and C.~Savarese, ``Control of reflector antennas performance by rim loading,'' \emph{IEEE Transactions on Antennas and Propagation}, vol.~29, no.~5, pp. 773--779, 1981.

\bibitem{Buehrer2022}
R.~M. Buehrer and S.~W. Ellingson, ``Pattern control for reflector antennas using electronically-reconfigurable rim scattering,'' in \emph{2022 IEEE International Symposium on Antennas and Propagation and USNC-URSI Radio Science Meeting (AP-S/URSI)}, 2022, pp. 577--578.

\bibitem{Buehrer2023}
------, ``Weight selection for pattern control of paraboloidal reflector antennas with reconfigurable rim scattering,'' in \emph{2023 IEEE Aerospace Conference}, 2023, pp. 1--8.

\bibitem{Luo10}
Z.-Q. Luo, W.-k. Ma, A.~M.-c. So, Y.~Ye, and S.~Zhang, ``Semidefinite relaxation of quadratic optimization problems,'' \emph{IEEE Signal Processing Magazine}, vol.~27, no.~3, pp. 20--34, 2010.

\bibitem{Tranter17}
J.~Tranter, N.~D. Sidiropoulos, X.~Fu, and A.~Swami, ``Fast unit-modulus least squares with applications in beamforming,'' \emph{IEEE Transactions on Signal Processing}, vol.~65, no.~11, pp. 2875--2887, 2017.

\bibitem{yang10}
X.-S. Yang, \emph{Nature-inspired metaheuristic algorithms}.\hskip 1em plus 0.5em minus 0.4em\relax Luniver press, 2010.

\bibitem{Lundardi18}
\BIBentryALTinterwordspacing
W.~T. Lunardi and H.~Voos, ``Comparative study of genetic and discrete firefly algorithm for combinatorial optimization,'' in \emph{Proceedings of the 33rd Annual ACM Symposium on Applied Computing}, ser. SAC '18.\hskip 1em plus 0.5em minus 0.4em\relax New York, NY, USA: Association for Computing Machinery, 2018, p. 300–308. [Online]. Available: \url{https://doi.org/10.1145/3167132.3167160}
\BIBentrySTDinterwordspacing

\bibitem{russell2002artificial}
S.~Russell and P.~Norvig, \emph{Artificial Intelligence: A Modern Approach}.\hskip 1em plus 0.5em minus 0.4em\relax Pearson, 2002.

\bibitem{dorigo1999ant}
M.~Dorigo, G.~Di~Caro, and L.~M. Gambardella, ``Ant algorithms for discrete optimization,'' \emph{Artificial life}, vol.~5, no.~2, pp. 137--172, 1999.

\bibitem{chen2009novel}
W.-N. Chen, J.~Zhang, H.~S. Chung, W.-L. Zhong, W.-G. Wu, and Y.-h. Shi, ``A novel set-based particle swarm optimization method for discrete optimization problems,'' \emph{IEEE Transactions on evolutionary computation}, vol.~14, no.~2, pp. 278--300, 2009.

\bibitem{Li15}
X.-K. Li, C.-H. Gu, Z.-P. Yang, and Y.-H. Chang, ``Virtual machine placement strategy based on discrete firefly algorithm in cloud environments,'' in \emph{2015 12th International Computer Conference on Wavelet Active Media Technology and Information Processing (ICCWAMTIP)}.\hskip 1em plus 0.5em minus 0.4em\relax IEEE, 2015, pp. 61--66.

\bibitem{FISTER201334}
\BIBentryALTinterwordspacing
I.~Fister, I.~Fister, X.-S. Yang, and J.~Brest, ``A comprehensive review of firefly algorithms,'' \emph{Swarm and Evolutionary Computation}, vol.~13, pp. 34--46, 2013. [Online]. Available: \url{https://www.sciencedirect.com/science/article/pii/S2210650213000461}
\BIBentrySTDinterwordspacing

\bibitem{sayadi10}
M.~Sayadi, R.~Ramezanian, and N.~Ghaffari-Nasab, ``A discrete firefly meta-heuristic with local search for makespan minimization in permutation flow shop scheduling problems,'' \emph{International Journal of Industrial Engineering Computations}, vol.~1, no.~1, pp. 1--10, 2010.

\bibitem{Hum23}
S.~V. Hum, S.~Ellingson, and R.~M. Buehrer, ``Toward electronically reconfigurable rims for reflectors in radio astronomy,'' in \emph{2023 XXXVth General Assembly and Scientific Symposium of the International Union of Radio Science (URSI GASS)}, 2023, pp. 1--3.

\bibitem{Budhu24}
J.~Budhu, S.~V. Hum, S.~Ellingson, and R.~M. Buehrer, ``Design of rim-located reconfigurable reflectarrays for interference mitigation in reflector antennas,'' \emph{IEEE Transactions on Antennas and Propagation}, pp. 1--1, 2024.

\bibitem{Budhu24b}
------, ``Loading rims of radio telescopes with reconfigurable reflectarrays for adaptive null-steering,'' in \emph{USNC-URSI Radio Science Meeting}, 2024.

\end{thebibliography}

\onecolumn
\appendix
\section{\\Proof of Theorem \ref{thm:error_prob}}
\label{ap:thm}
\begin{theorem*}
Given measurements $Z_0$ and $Z_1$ with $\Delta G > 0$, the probability of rejecting the new weight vector is
\begin{equation}
    Pr(Z_1 > Z_0 | \Delta G > 0) = Q\left(\sqrt{\frac{N\Delta G^2\text{INR}}{\left(4G - 4G\Delta G + 2\Delta G^2\right)\text{INR} + 8G - 4\Delta G + \frac{4}{\text{INR}}}}\right)
\end{equation}
\end{theorem*}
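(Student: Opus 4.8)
The plan is to approximate both power measurements $Z_0$ and $Z_1$ as Gaussian random variables via the central limit theorem, obtain their means and variances in closed form, and then rewrite the rejection event $\{Z_1 > Z_0\}$ as a one-sided tail of a single Gaussian, which is exactly what produces the $Q$-function.

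First I would pin down the per-sample statistics implied by (\ref{eq:closed_loop_sig}). At the current weights the interferer contributes gain $G$ and at the proposed weights it contributes $G-\Delta G$, so each sample $x[k]$ is a zero-mean real Gaussian with variance $G_t\sigma_I^2+\sigma_n^2$, where $G_0=G$ and $G_1=G-\Delta G$, because it is an independent sum of the scaled interference $\sqrt{G_t}\,z[k]$ and the noise $n[k]$. The per-sample power $|x[k]|^2$ is then the square of a zero-mean Gaussian, so its mean is $G_t\sigma_I^2+\sigma_n^2$ and its variance is $2\big(G_t\sigma_I^2+\sigma_n^2\big)^2$.

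Next, since $Z_t$ in (\ref{eq:closed_loop_cost}) averages $N$ i.i.d. such terms, the CLT gives $Z_t\approx\mathcal{N}\!\big(G_t\sigma_I^2+\sigma_n^2,\ \tfrac{2}{N}(G_t\sigma_I^2+\sigma_n^2)^2\big)$ for large $N$. The two measurements are taken on disjoint sample blocks (the algorithm reads $x[(k+1)N+n]$ for $Z_1$), hence $Z_0$ and $Z_1$ are independent, and the difference $D=Z_1-Z_0$ is approximately Gaussian with mean $\mu_D=-\Delta G\,\sigma_I^2$ and variance $\sigma_D^2=\tfrac{2}{N}\big[((G-\Delta G)\sigma_I^2+\sigma_n^2)^2+(G\sigma_I^2+\sigma_n^2)^2\big]$. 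The error event in (\ref{eq:prob_error}) becomes $\{D>0\}$, so $\Pr(Z_1>Z_0)=Q\!\big(-\mu_D/\sigma_D\big)=Q\!\big(\Delta G\,\sigma_I^2/\sigma_D\big)$; this is a genuine tail, with positive argument, precisely because $\Delta G>0$ forces $\mu_D<0$. Substituting $\text{INR}=\sigma_I^2/\sigma_n^2$, factoring $\sigma_n^2$ out of every term, expanding the two squares, and dividing through by one power of $\text{INR}$ then collapses the ratio to the stated $\Gamma$.

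I expect the main obstacle to be conceptual rather than computational: the summands are scaled chi-square variables, so the Gaussian shape is only asymptotic and the formula is a large-$N$ approximation (which the simulations confirm to be very accurate). A related point of care is the real-versus-complex convention, since a circularly-symmetric complex-baseband model would replace $\mathrm{Var}(|x[k]|^2)=2\sigma^4$ by $\sigma^4$ and thereby halve every constant in $\sigma_D^2$; reproducing the quoted denominator requires treating $x[k]$, $z[k]$, and $n[k]$ as real. Everything after the moment computation is routine algebraic bookkeeping.
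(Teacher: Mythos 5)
Your proposal is correct and follows essentially the same route as the paper's proof: Gaussian (CLT) approximation of the chi-square--distributed averages $Z_0$ and $Z_1$, identical means and variances, and reduction of $\Pr(Z_1>Z_0)$ to a $Q$-function of the normalized mean of the difference. If anything, you are slightly more careful than the paper in two places --- stating explicitly that $Z_0$ and $Z_1$ are computed from disjoint sample blocks (which justifies adding the variances) and flagging the real-versus-complex convention that fixes the factor of $2$ in $\mathrm{Var}(|x[k]|^2)$ --- but the argument is the same.
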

\begin{proof}
Let the received signal be represented by $x = \sqrt{G}z[k] + n[k]$, where $k=1:N$, $z[k]\sim N(0, \sigma_I^2)$, $n\sim N(0, \sigma_n^2$) and $G$ is the gain. 
The objective is to determine whether the gain in a current measurement is lower than the gain in a previous measurement. 
To that end, let the previous gain be denoted as $G$, and the current gain denoted as $G-\Delta G$. 
Then, represent the current decision variable as $Z_1$ and the previous decision variable as $Z_0$. 
\begin{align}
    Z_0 &= \frac{1}{N}\sum_{k=1}^N \left(\sqrt{G}z[k] + n[k]\right)^2\\
    Z_1 &= \frac{1}{N}\sum_{k=1}^N \left(\sqrt{G-\Delta G}z[k] + n[k]\right)^2
\end{align}
Note that $Z_0, Z_1$ are chi-square distributed. 
So, as $N\to\infty$, $Z_0$ and $Z_1$ can be approximated as Gaussian random variables with means 
\begin{align*}
    \mu_{Z_0} &= G\sigma_I^2 + \sigma_n^2\\
    \mu_{Z_1} &= (G-\Delta G)\sigma_I^2 + \sigma_n^2
\end{align*}
and variances 
\begin{align*}
    \sigma_{Z_0}^2 &= \frac2N (G\sigma_I^2+ \sigma_n^2)^2\\
    \sigma_{Z_1}^2 &= \frac2N ((G-\Delta G)\sigma_I^2 + \sigma_n^2)^2
\end{align*}
When $\Delta G > 0$, we would like $Z_0 < Z_1$. 
Represent the probability of error as 
\begin{align}
    P_e &= Pr(Z_1 > Z_0 | \Delta G > 0)
\end{align}
Note that $Z_1 - Z_0$ is normally distributed as $N\to\infty$ with mean $\mu_{Z_1}-\mu_{Z_0}$ and variance $\sigma_{Z_1}^2+\sigma_{Z_0}^2$. 
\begin{align*}
    \mu_{Z_1}-\mu_{Z_0} &= (G-\Delta G)\sigma_I^2 + \sigma_n^2 - (G\sigma_I^2 + \sigma_n^2)\\
    &= -\Delta G\sigma_I^2
\end{align*}
\begin{align*}
    \sigma_{Z_1}^2 + \sigma_{Z_0}^2 &= \frac2N \left[\left((G-\Delta G)\sigma_I^2 + \sigma_n^2\right)^2 + \left(G\sigma_I^2 + \sigma_n^2\right)^2\right]\\
    &= \frac2N \left[\left(2G^2 - 2G\Delta G + \Delta G^2\right)\sigma_I^4 + \left(4G-2\Delta G\right)\sigma_I^2 \sigma_n^2 + 2\sigma_n^4\right]
\end{align*}
Then, 
\begin{align*}
    Pr(Z_1 > Z_0) &= \int_0^\infty \frac{1}{\sqrt{2\pi(\sigma_{Z_1}^2 + \sigma_{Z_0}^2)}} \exp{\frac{-(x-(\mu_{Z_1}-\mu_{Z_0}))^2}{2(\sigma_{Z_1}^2 + \sigma_{Z_0}^2)}} dx\\
    &= \int_{\frac{-(\mu_{Z_1}-\mu_{Z_0})}{\sqrt{\sigma_{Z_1}^2 + \sigma_{Z_0}^2}}}^\infty \frac{1}{\sqrt{2\pi}}\exp{\frac{-y^2}{2}}dy\\
    &= Q\left(\frac{\Delta G \sigma_I^2}{\sqrt{\frac2N \left[\left(2G^2 - 2G\Delta G + \Delta G^2\right)\sigma_I^4 + \left(4G-2\Delta G\right)\sigma_I^2 \sigma_n^2 + 2\sigma_n^4\right]}}\right)\\
    &= Q\left(\sqrt{\frac{N\Delta G^2\sigma_I^4}{\left(4G^2 - 4G\Delta G + 2\Delta G^2\right)\sigma_I^4 + \left(8G-4\Delta G\right)\sigma_I^2 \sigma_n^2 + 4\sigma_n^4}}\right)\\
    &= Q\left(\sqrt{\frac{N}{\left(\frac{4G^2}{\Delta G^2} - 4\frac{G}{\Delta G} + 2\right) + \frac{8G - 4\Delta G}{\Delta G^2} \frac{1}{\text{INR}} + \frac{4}{\text{INR}^2}}}\right)\\
    &= Q\left(\sqrt{\frac{N\Delta G^2\text{INR}}{\left(4G - 4G\Delta G + 2\Delta G^2\right)\text{INR} + 8G - 4\Delta G + \frac{4}{\text{INR}}}}\right)
\end{align*}
is the probability of error given $\Delta G > 0$, where $\text{INR} = \frac{\sigma_I^2}{\sigma_n^2}$. 
\end{proof}

\end{document}